\documentclass{article}
\usepackage{amsmath,amsfonts,amssymb,amsthm}
\usepackage{kbordermatrix}
\usepackage{graphicx}
\usepackage{booktabs}
\usepackage[round]{natbib}
\usepackage{subfig}
\usepackage{microtype}

\title{Multibrand geographic experiments}
\date{October 2016}
\author{Art B. Owen\\Google Inc.\and Tristan Launay\\Google Inc.}

\newtheorem{theorem}{Theorem}

\theoremstyle{definition}
\newtheorem{definition}{Definition}

\newcommand{\post}{\mathrm{post}}
\newcommand{\pre}{\mathrm{pre}}
\newcommand{\err}{\varepsilon}
\newcommand{\dnorm}{\mathcal{N}}
\newcommand{\hbbeta}{\hat{\bar\beta}}
\newcommand{\dgam}{\mathrm{Gam}}
\newcommand{\var}{\mathrm{var}}
\newcommand{\wh}{\widehat}

\newcommand{\dustd}{\mathbf{U}}

\newcommand{\cv}{\mathrm{cv}}

\newcommand{\simind}{\stackrel{\mathrm{ind}}\sim}
\newcommand{\simiid}{\stackrel{\mathrm{iid}}\sim}
\newcommand{\tran}{\mathsf{T}}

\newcommand{\phm}{\phantom{-}}

\newcommand{\e}{\mathbb{E}}

\begin{document}
\maketitle

\begin{abstract}
In a geographic experiment to measure advertising effectiveness,
some regions (hereafter GEOs) get increased advertising while others do not.
This paper looks at running $B>1$ such experiments simultaneously
on $B$ different brands in $G$ GEOs, and then using shrinkage methods to estimate
returns to advertising.  There are important practical gains from
doing this.  Data from any one brand helps to estimate the return
of all other brands. We see this in both a frequentist and Bayesian
formulation. As a result, each individual experiment could be made
smaller and less expensive when they are analyzed together.
We also provide an experimental design for multibrand experiments
where half of the brands have increased spend in each GEO while
half of the GEOs have increased spend for each brand.  
For $G>B$ the design is a two level factorial for each brand
and simultaneously a supersaturated design for the GEOs.
Multiple simultaneous experiments also allow one to 
identify GEOs in which advertising is generally more effective.
That cannot be done in the single brand experiments we consider.
\end{abstract}

\section{Introduction}

It is difficult to measure the impact of advertising
even in the online setting where responses of individual
users can be linked to conversion activities such as
visiting a website or buying a product. Regression models
are often fit to such rich observational data. While insights
from observational data are suggestive, they 
seldom establish causal relations.

Google has expertise in using geographical experiments
to measure the causal impact of increased advertising,
as decribed by \cite{vave:koeh:2011,vave:koeh:2012}.
Advertising is increased in some regions and left constant
or decreased in others (the control regions). 
Then the corresponding values of some
key performance indicator (KPI) are measured and related
to the spending level. We will call
the regions GEOs.  The Nielsen company has designated
market areas (DMAs) and television market areas (TMAs).
GEOs are similar but not necessarily identical to these.

Other things being equal, it is easier to measure the
impact of a large advertising change than a small one.
Having two widely separated spend levels makes for a 
more informative experimental design. There are however practical and
organizational constraints on the size of an experimental
intervention. Advertising managers may be reluctant to experiment with
large spend changes.  Also, in a small GEO, there may not be
enough inventory of ad impressions to sustain a large spending
increase.

Both of these problems can be mitigated by experimenting
on several brands at once. The experimental design is
like the one sketched below.
\begin{align*}
\begin{matrix}
& \text{GEO 1} & \text{GEO 2} & \text{GEO 3} & \text{GEO 4} & \cdots & \text{GEO G} \\
\text{Brand 1\,} & + & - & + & - &\cdots& +\\
\text{Brand 2\,} & - & - & + & + &\cdots& -\\
\vdots &\vdots &\vdots &\vdots &\vdots &\ddots &\vdots\\
\text{Brand B} & - & + & - & + &\cdots &+
\end{matrix}
\end{align*}
Here the experiment gives Brand 1 an increased spending
level in GEOs 1 and 3 and the control level of
spending in GEOs 2 and 4. Every brand gets increased
spend in half of the GEOs, with each GEO being in the
test group for some brands and the control group for others. 
The combined information
from all B brands can then be used to get a good
measure of the overall effectiveness of advertising.
Using shrinkage methods it is also possible for the
data from one brand to improve estimation for another one.
Because the multibrand experiment pools information, it
can be run with smaller spending changes than we would
need in single brand experiments.

An outline of this note is as follows.
Section~\ref{sec:models} presents regression models
for single brands and multiple brands.
Section~\ref{sec:design} gives a scrambled checkerboard 
experimental design
in which half of the GEOs are treatment for each brand
and half of the brands get the treatment level in each GEO.
Subject to these constraints, there may be weak
correlations among pairs of brands or among pairs of GEOs.
Section~\ref{sec:design} also shows that
certain classical designs (balanced incomplete blocks
and Hadamard matrices) that might seem appropriate
are, in fact, not well suited to this problem.
Section~\ref{sec:regression} simulates a single brand
experiment $1000$ times over $20$ GEOs. 
The true return to advertising in those simulations is $\beta=5$.
There is reasonable power to detect $\beta\ne0$ when 
advertising is increased by $1$\% of prior period sales,
but not when it is increased by only $0.5$\% of prior sales.
In either case the standard error of the estimated return
is quite large.
Section~\ref{sec:multibrandsimu} describes a multibrand simulation
with $30$ brands in $20$ GEOs. The advertising return
for brand $b$ is $\beta_b\sim\dnorm(5,1)$. The
estimator of \cite{xie:kou:brow:2012} that shrinks each brand's
parameter estimate towards their common average is about $3.2$
times as efficient at estimating $\beta_b$ than using only
that brand's data, when the treatment is $1$\% of sales.
For smaller treatments, $0.5$\% of sales, shrinkage is about $7.8$
times as efficient as single brand experiments.
Some simulation details are placed
in Section~\ref{sec:simulation}.
Section~\ref{sec:fullbayes} simulates a fully Bayesian analysis.
The simulation there has $G=160$ GEOs but only $B=4$ brands 
and it also shows a strong benefit from pooling.
The Bayesian method has similar accuracy to Stein shrinkage and comes with easily
computed posterior credible intervals.
Section~\ref{sec:discussion} has some conclusions and discussion.

\section{Single- and multi-brand models}\label{sec:models}

We target an experiment comparing an $8$ week
background period followed by a $4$ week experimental period. 
To prepare for this project, data from $5$ very
different advertisers was investigated.
The industries represented were:
hair care, cosmetics, outdoor clothing,
photography and baked goods. There were strong
similarities in the data for all of these industries.

If one plots the $8$ week KPI for a brand versus
the prior $4$ week KPI for that brand, using one point
per GEO the resulting points fall very close to a straight
line on a log-log plot, in all $5$ data sets. 
The linear pattern is so strong
because the GEOs vary immensely in size.

Inspecting all of that data it became clear that
the following model was a good description
of a single brand's data
\begin{align}\label{eq:singlemodel}
Y^\post_g = \alpha_0 + \alpha_1Y^\pre_g+ \beta X_g^\post + \err_g^\post,\quad 
g=1,\dots,G.
\end{align}
Here $Y^\post_g$ is the KPI for GEO $g$ in the experimental
period, $Y^\pre_g$ is the corresponding value in the pre-experimental
period and $X_g^\post$ is the amount spent on advertising in GEO $g$
in the post period. The basic linear regression
$\alpha_0 + \alpha_1Y^\pre_g$ is strongly predictive, because
the underlying GEO sizes are very stable and the KPI is roughly
proportional to size. There was not an
appreciable week to week autocorrelation for sales data within GEOs.
What little autocorrelation there was would be greatly diminished
for multi-week aggregates such as an $8$ week prior period
followed by a $4$ week experimental one.

Model~\eqref{eq:singlemodel} is the one used by
\cite{vave:koeh:2011}.
The parameter of greatest interest is $\beta$.
When $X_g^\post$ is the dollar amount spent on advertising,
and the KPI $Y_g^\post$ is the revenue in the experimental
period, then $\beta$ is simply the number of incremental
dollars of revenue per dollar spent on advertising.
The interpretability of $\beta$ as a return to advertising
is the reason why
we work with model~\eqref{eq:singlemodel}. Modeling the logarithm
of the KPI would have some statistical advantages,
but it makes for a less directly interpretable~$\beta$.

In simulations, the value of $X_g^\post$ is 
proportional to $Y_g^\pre$. We take $X_g^\post = \delta Y_g^\pre$
in the treatment group and $X_g^\post=0$ in the control group.
Our default choice is $\delta =0.01$,
representing differential spend equal to one percent of prior sales.
This need not mean setting advertising to $0$ in the
control group.  Here $X_g^\post$ is the level of additional
spending above the historic or pre-planned level for that GEO.
In an experiment that reduced spend in some GEOs
to offset increases in others, $X_g^\post$ would be
negative in some GEOs and positive in others.

In model~\eqref{eq:singlemodel}, it is not reasonable
to suppose that the errors $\err_g^\post$ are independent
and identically distributed.  In all five real data sets
it was clear that the standard deviation of the KPI
is larger for larger GEOs. To a very good approximation,
the standard deviation was proportional to the KPI itself.
When simulating model~\eqref{eq:singlemodel}, Gamma random
variables were used instead of Gaussian ones.  The
standard deviation in a Gamma random variable is proportional
to its mean. See Section~\ref{sec:simulation}.

Now suppose that a single advertiser has 
multiple brands $b=1,\dots,B$.
It then pays to experiment on all $B$ brands at once.
In a multibrand setting we can fit the regression model
\begin{align}\label{eq:multimodel}
  Y^\post_{gb} = \alpha_{0b} + \alpha_{1b}Y^\pre_{gb}+ \beta_b X_{gb}^\post + \err_{gb}^\post,\quad b=1,\dots,B,\ g=1,\dots,G.
\end{align}
The brands should be distinct enough that advertising for one
of them does not affect sales for another.  For instance, two different
diet sodas might be too closely related for this model
to be appropriate.

The overall return to advertising is measured by
$$\bar\beta = \frac1B\sum_{b=1}^B\beta_b.$$
A combined experiment will be very informative about $\bar\beta$.
By using Stein shrinkage, the 
combined experiment can also give more accurate estimates
of individual $\beta_b$ than we would get from just an experiment
on brand $b$.

\subsection{Differential GEO responsiveness}
A multibrand experiment can address some issues that 
are impossible to address in a single brand experiment.
Suppose for instance that advertising is more effective
in some GEOs than it is in others. In a single experiment
an unusually responsive or unresponsive GEO might generate
an outlier, but we would not know the reason. From a multibrand
experiment we can fit the model
\begin{align}\label{eq:multimodellocal}
  Y^\post_{gb} = \alpha_{0b} + \alpha_{1b}Y^\pre_{gb}+ (\beta_b+\gamma_g) X_{gb}^\post + \err_{gb}^\post.
\end{align}
The new parameter $\gamma_g$ measures the extent to which
advertising is especially effective  in GEO $g$. In a single
brand experiment with $G$ responses we could not estimate
these per-GEO parameters. It would amount to fitting
$3+G$ regression parameters to $G$ responses. 
In a multibrand experiment we get 
$G\times B$ responses and model~\eqref{eq:multimodellocal}
has only $3B+G$ regression parameters.  If one consistently
sees that some GEOs have better responses to ads than others
then it would be reasonable to focus more advertising in those
GEOs. The parameter $\gamma_g$ can still be practically 
important even when it is not large enough to generate outliers.

\section{Scrambled checkerboard designs}\label{sec:design}

For each brand, we should have half of the GEOs
in the control group and half in the treatment
group.  This necessitates an even number $G$ of GEOs
which is not difficult to arrange.
Similarly, with an even number $B$ of brands,
each GEO should be in the treatment group for 
half of the brands and in the control group for the other half.
We would want to avoid
a situation where a large GEO like Los Angeles was the control 
group for most of the brands, or in the treatment group for
most of the brands.

A second order concern is that we would not want any pair
of brands to always be treated together or in the control
group together. For two brands the four possibilities
$\{ TT, TC, CT,CC\}$ describe GEOs where the first brand
is treatment or control based on the first letter (T or C)
and the second brand's state is given by the second letter.
Ideally we would like all four of these possibilities to arise
equally often for all pairs of brands and an analogous condition
to hold for GEOs.

This second order concern brings to mind balanced incomplete
block (BIB) designs \citep{coch:cox:1957}, but that is a different concept and
a BIB does not actually solve the problem.
See Section~\ref{sec:bib}.
There is also potential for submatrices of Hadamard matrices to
be good designs but that imposes unwanted restrictions on the
numbers $B$ and $G$ of brands and GEOs. See Section~\ref{sec:hadamard}.

\begin{theorem}\label{thm:nocando} 
Suppose that there are $G\ge1$ GEOs and $B\ge1$ brands
where each GEO has the treatment for half of the brands
and each brand is in the treatment group for half of the
GEOs.  Then it is impossible to have all four
combinations $\{ TT, TC, CT,CC\}$ arise equally often for
each distinct pair of GEOs as well as for each distinct
pair of brands.
\end{theorem}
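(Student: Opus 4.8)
The plan is to encode the design as a $\pm1$ matrix and convert the two combinatorial hypotheses into orthogonality statements. First I would write $X=(x_{bg})$ for the $B\times G$ matrix with $x_{bg}=+1$ when brand $b$ is treatment in GEO $g$ and $x_{bg}=-1$ when it is control. The marginal hypotheses say exactly that every row of $X$ sums to zero and every column of $X$ sums to zero; equivalently, each row is orthogonal to the all-ones vector $\mathbf{1}_G\in\mathbb{R}^G$ and each column is orthogonal to $\mathbf{1}_B\in\mathbb{R}^B$.

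Next I would translate the pairwise ``equally often'' condition. For two distinct brands $b,b'$, if each of $TT,TC,CT,CC$ occurs in exactly $G/4$ of the GEOs, then the products $x_{bg}x_{b'g}$ equal $+1$ on agreements and $-1$ on disagreements, so
\[ \sum_{g=1}^{G} x_{bg}x_{b'g} = (\#TT+\#CC)-(\#TC+\#CT)=0, \]
which says distinct rows of $X$ are orthogonal. The identical argument applied to GEO pairs shows distinct columns of $X$ are orthogonal. Note that only this weaker consequence of ``equally often'' is actually needed.

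The crux is then a dimension count carried out twice. The $B$ rows of $X$ are nonzero (they are $\pm1$ vectors), are pairwise orthogonal, and are each orthogonal to the nonzero vector $\mathbf{1}_G$. Hence the rows together with $\mathbf{1}_G$ constitute $B+1$ mutually orthogonal nonzero vectors in $\mathbb{R}^G$, forcing $B+1\le G$. Applying the same reasoning to the columns and $\mathbf{1}_B$ in $\mathbb{R}^B$ forces $G+1\le B$. Together these give $B+1\le G\le B-1$, an immediate contradiction, so no such design can exist.

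I expect the only genuine obstacle to be conceptual rather than computational: recognizing that the marginal balance conditions are precisely orthogonality to the all-ones vector, so that $\mathbf{1}_G$ (respectively $\mathbf{1}_B$) can be adjoined to the rows (respectively columns) to push the count from $B$ mutually orthogonal vectors up to $B+1$. That extra ``$+1$'' in each bound is exactly what makes the two inequalities incompatible; without it the orthogonality alone would only yield the weaker equality $B=G$, i.e.\ a square Hadamard matrix, which is not in itself contradictory. The remaining points to check are merely that every vector involved is nonzero and that the orthogonality relations hold across all pairs at once, both of which are routine once the $\pm1$ encoding is fixed.
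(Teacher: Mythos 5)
Your proof is correct and follows essentially the same route as the paper's: encode the design as a $\pm1$ matrix, observe that the balance conditions make every row and column orthogonal to the relevant all-ones vector, deduce $B\le G-1$ and $G\le B-1$ from the two orthogonality counts, and conclude by contradiction. The only cosmetic difference is that you transpose the matrix and spell out the ``adjoin $\mathbf{1}$'' step more explicitly than the paper does.
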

\begin{proof}
If we represent our design by a $G\times B$ matrix $Z$
of $\pm 1$s with $+1$ for treatment and $-1$ for control, then each
row and column of $Z$ must sum to zero. The second order consideration
about pairs $TT$ through $CC$
requires the columns of $Z$ to be orthogonal. Since they
are orthogonal to a column of $1$s there can only be $G-1$
of them at most, so $B\le G-1$. The same argument applied to
rows yields $G\le B-1$
We cannot have both $G<B$ and $B<G$,
so it is impossible to exactly satisfy the second
order conditions. 
\end{proof}

Because the second order considerations cannot possibly
be satisfied, we compromise on them while still
insisting on balance within every row and every column.

A practical approach is to start with a $G\times B$ checkerboard pattern
like that in Figure~\ref{fig:checkerboard}, and randomly perturb it. 
Each brand gets
the treatment in half of the GEOs and conversely each GEO is
in the treatment group for half of the brands. Then we use
random swaps to break up the checkerboard pattern. The second order
criteria are then treated via random balance \citep{satt:1959}.

The swaps are based on a Markov chain studied by
\cite{diac:gang:1995}. Their setup uses $0$s and $1$s
where we have $\pm1$s, but results translate directly
between the two encodings.
We sample two distinct rows and two distinct columns of the grid.
If the pattern in the sampled $2\times2$ submatrix matches
$$
\begin{pmatrix} +&\cdot\\\cdot&+\end{pmatrix}
\quad\text{or}\quad
\begin{pmatrix} \cdot&+\\+&\cdot\end{pmatrix}
$$
then we switch it to the other of these two. 
Here and below we use $\cdot$ in place of $-$ 
where that would improve clarity.
\cite{diac:gang:1995} show that this sampler yields a 
connected symmetric aperiodic Markov chain
on the set of binary $G\times B$ matrices with row
sums equal to $B/2$ and column sums equal to $G/2$.
The stationary distribution is uniform on such matrices.

Their setting was more general: the matrix contained
nonnegative integers with specified row and column sums,
not just $0$s and $1$s.
A verbatim translation of their algorithm would actually
make the proposed switch with probability $1/2$. Raising
the acceptance probability to $1$ for binary matrices
still satisfies detailed balance with respect to the uniform
distribution, so the Markov chain still uniformly samples
the desired set of matrices.

\begin{figure}[t]
\centering
\includegraphics[width=0.99\hsize]{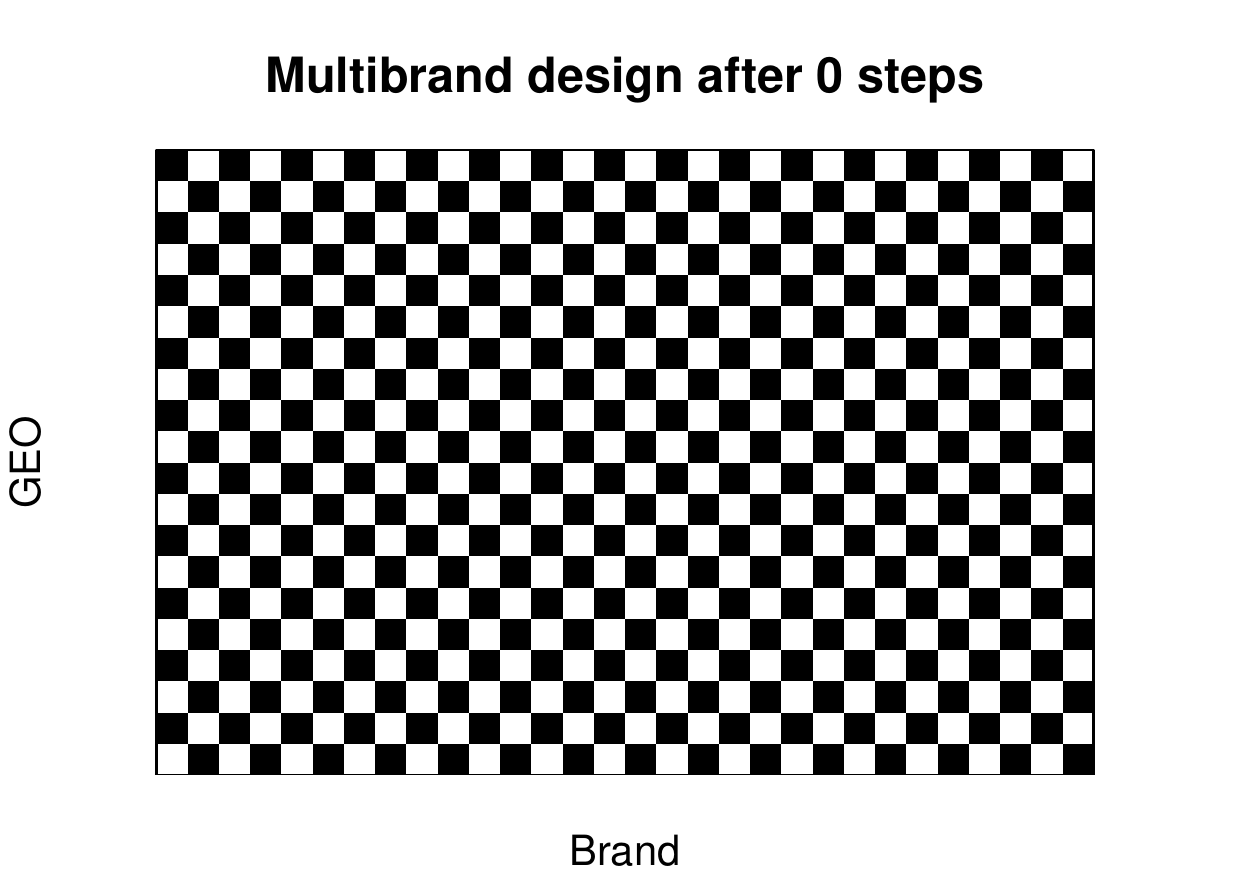}
\caption{\label{fig:checkerboard}
A design where half of the $20$ GEOs are treatment (black)
for each of the $30$ brands and the others are control (white).
Conversely, half of the $30$ brands
are treatment group for each of the $20$ GEOs and half are control. 
This design is unsuitable because any pair of brands either always
get the same allocation or always get an opposite allocation. We
address that problem via scrambling.
}
\end{figure}

Figure~\ref{fig:step100} shows the design after $100$
attempts to flip a $4$-tuple of elements.  The original
checkerboard pattern is still clearly visible and so $100$
attempts are not enough.

There are $16$ possibilities for
any $2\times 2$ submatrix of the design and $2$ of these possibilities
are flippable.  So we should expect that after the algorithm
has been running a while that the chance of a flip is about $1/8$.
The algorithm starts with a $100$\% flippable checkerboard
and so it is reasonable to suppose that the flipping chance starts
above $1/8$ and decreases to that level.
Each flip flips $4$ pixels in the image. Therefore we
reverse about $1/2$ pixels per attempt. 
Figure~\ref{fig:step30000} shows the result after $30{,}000$ attempts
so that the average number of flips per pixel is about $25$.

\begin{figure}[t]
\centering
\includegraphics[width=0.99\hsize]{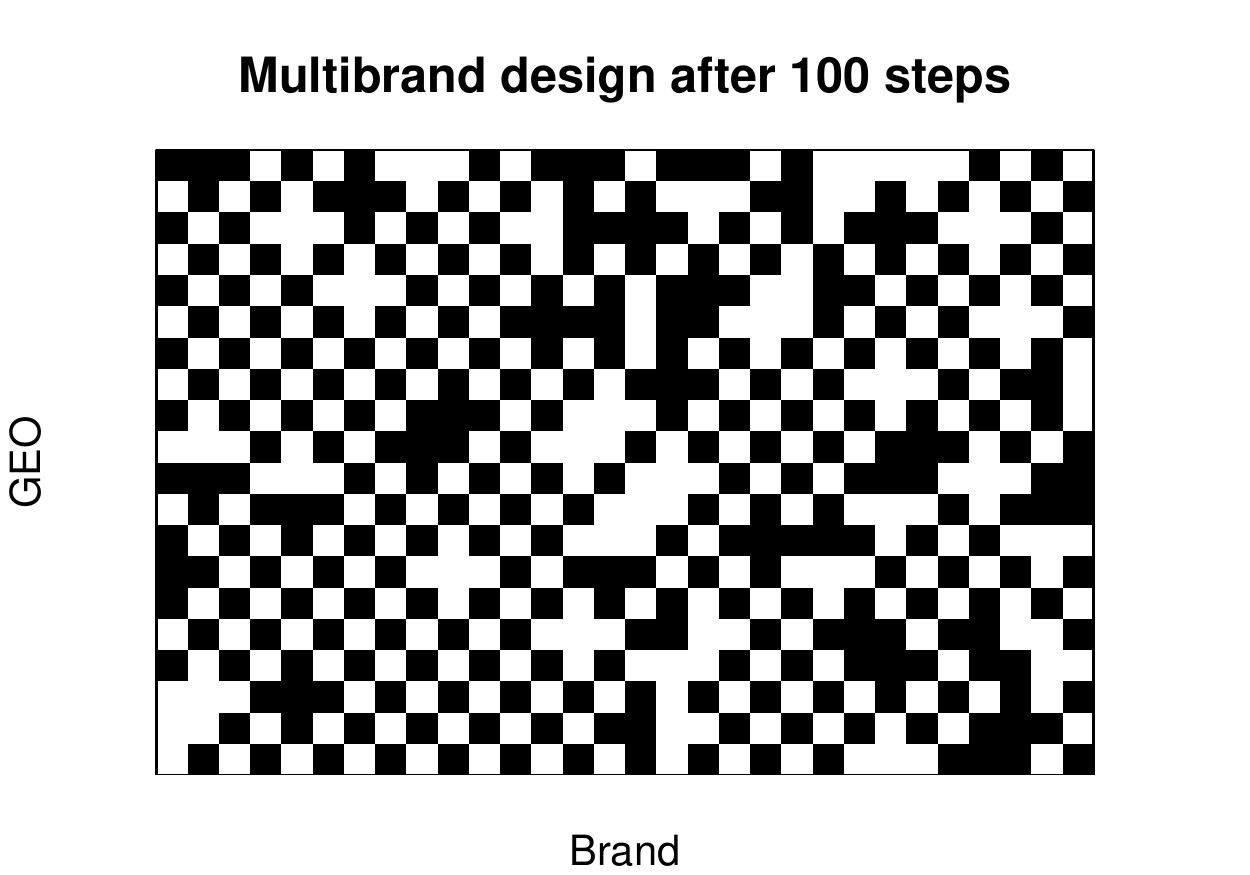}
\caption{\label{fig:step100}
Design from Figure~\ref{fig:checkerboard} after $100$ attempted flips,
showing that more than $100$ attempts are needed.
}
\end{figure}

\begin{figure}[t]
\centering
\includegraphics[width=0.99\hsize]{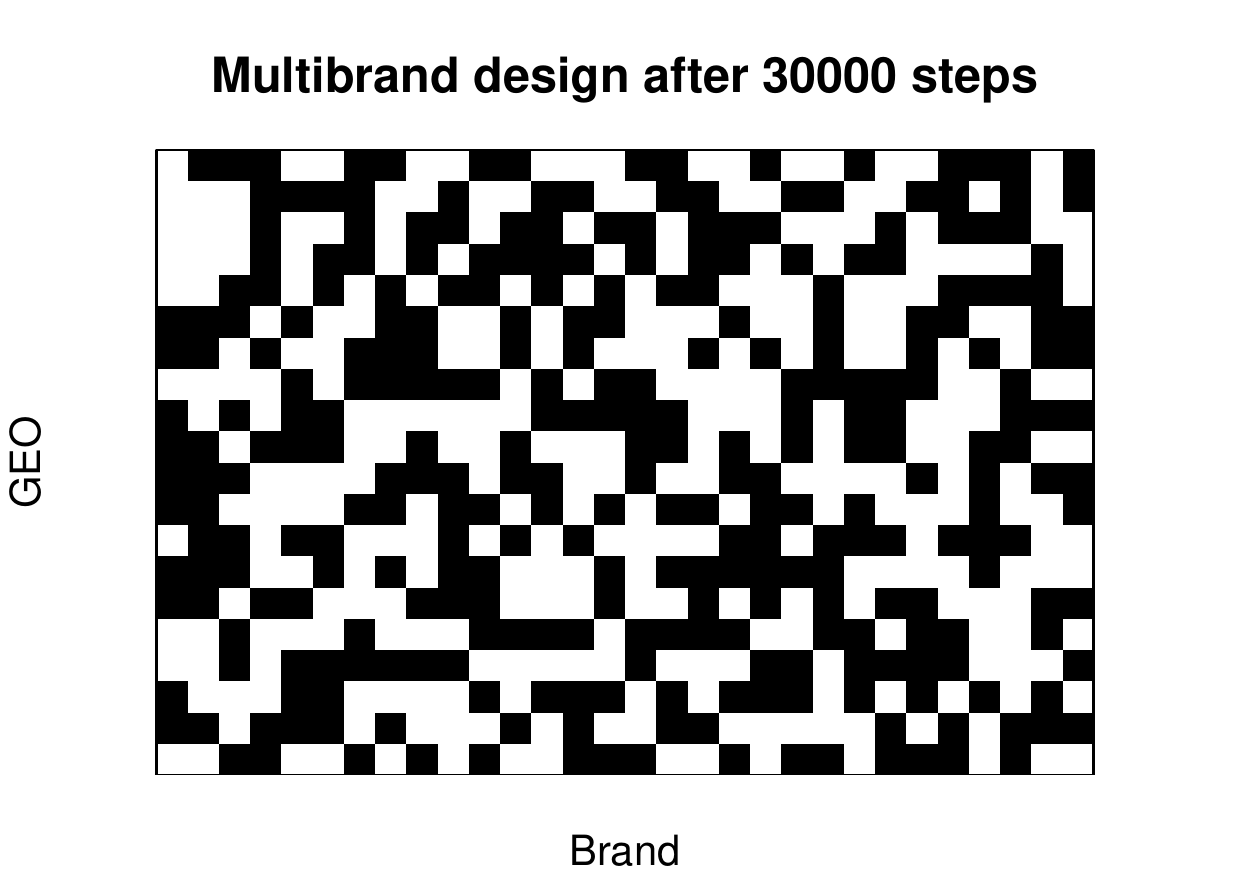}
\caption{\label{fig:step30000}
Design from Figure~\ref{fig:checkerboard} after $30{,}000$ attempted flips.
}
\end{figure}

The algorithm is very fast. To do $90{,}000$ steps on a
larger $60\times30$ grid
takes just over $7$ seconds in R on a commodity PC. It is possible to
do many more flips, but that seems unnecessary.

We can look at the correlations among brands as
the sampling proceeds.  There are $B$ brands and
hence $B(B-1)/2$ different off-diagonal correlations.  The minimum,
maximum and root mean squared correlations among brands
are plotted in Figure~\ref{fig:brandcorr}.  
The same quantities for GEOs are plotted 
in Figure~\ref{fig:geocorr}.
These correlations are remarkably
stable after a short warm-up period. The stability has set in
before $BG/2=300$ successful flips have been made.

\begin{figure}[t]
\centering
\includegraphics[width=0.99\hsize]{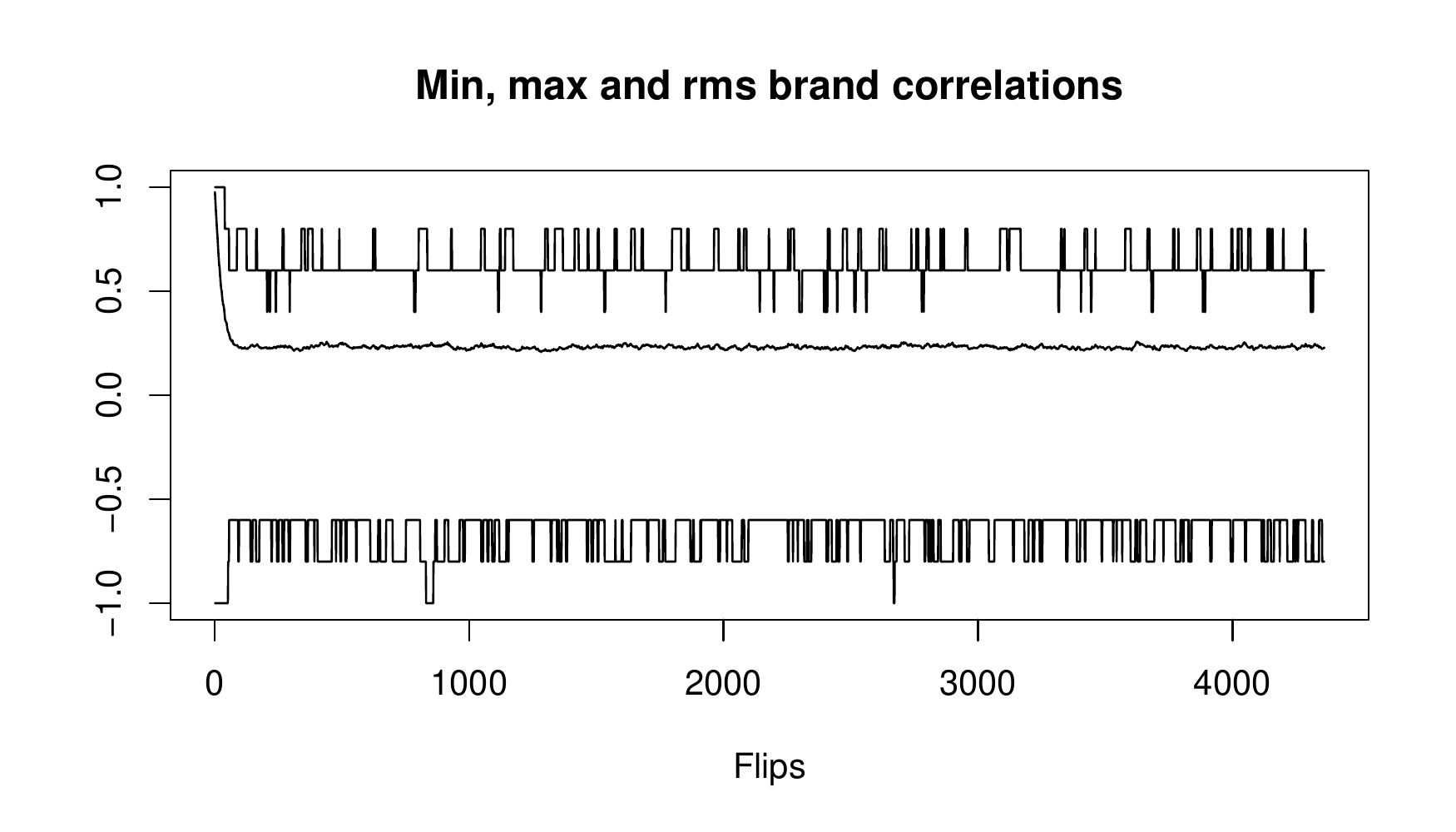}
\caption{\label{fig:brandcorr}
Interbrand correlations as the number of successful flips
increases.
}
\end{figure}

\begin{figure}[t]
\centering
\includegraphics[width=0.99\hsize]{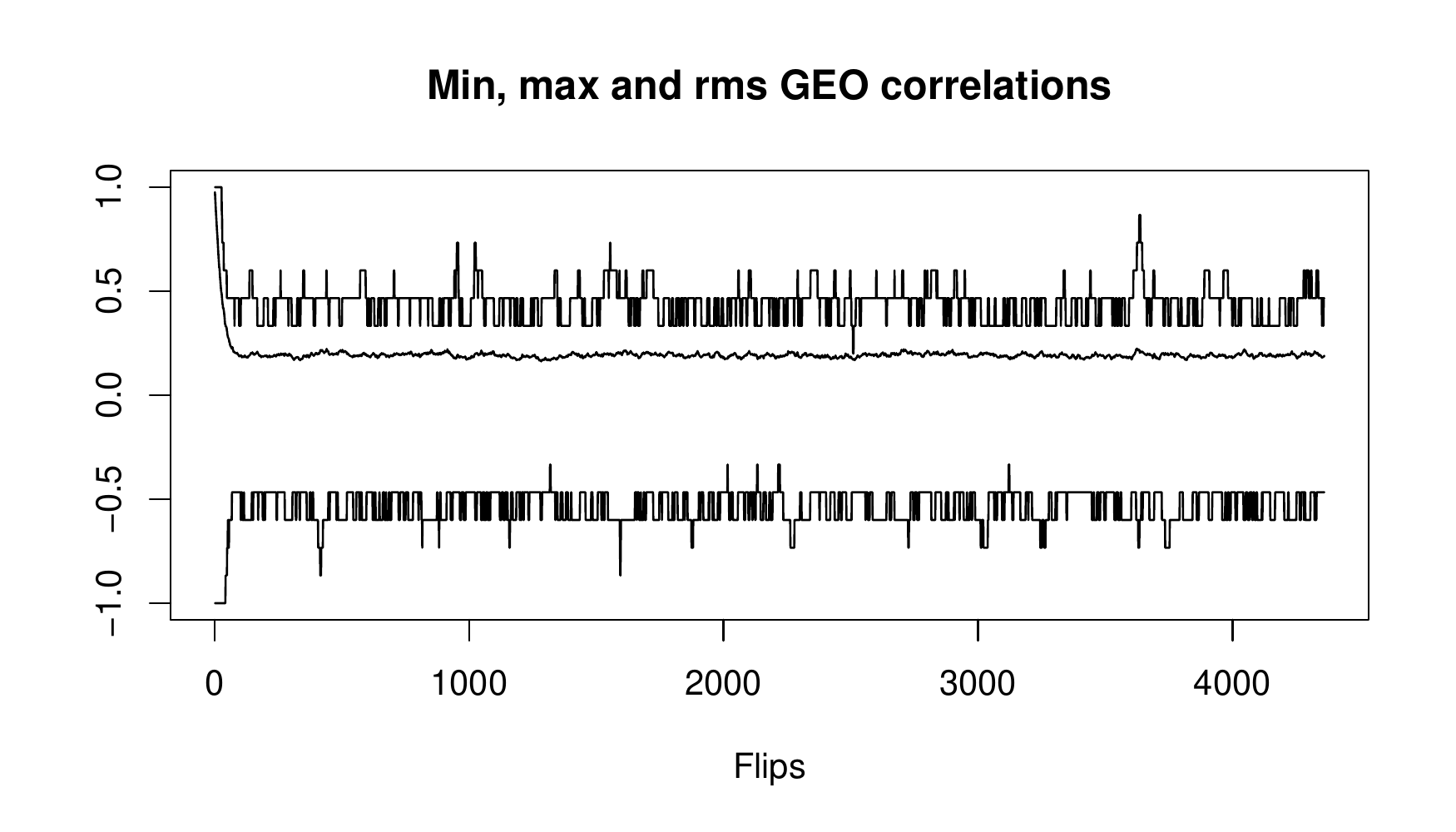}
\caption{\label{fig:geocorr}
InterGEO correlations as the flips proceed.
}
\end{figure}

There is a relationship among the sum of squared GEO
correlations and the sum of squared brand correlations
at every step of the algorithm. 
For brands $b,b'$ their correlation is $\rho_{bb'}=(1/G)\sum_{g=1}^GX_{bg}X_{b'g}$.
For GEOs $g,g'$ their correlation is $\rho_{gg'}=(1/B)\sum_{b=1}^BX_{bg}X_{bg'}$.
Then counting cases $g=g'$ and $b=b'$,
\begin{align*}
\sum_{gg'}\rho_{gg'}^2 &= \frac{S}{B^2},\quad\text{and}\quad 
\sum_{bb'}\rho_{bb'}^2 = \frac{S}{G^2},\quad\text{where}\\
 S & = \sum_g\sum_{g'}\sum_b\sum_{b'}X_{bg}X_{b'g}X_{bg'}X_{b'g'}
\end{align*}
which can be rearranged to get
$$
\sum_{b\ne b'}\rho_{bb'}^2 = \frac{B^2}{G^2}\sum_{g\ne g'}\rho_{gg'}^2+\frac1G.
$$
This phenomenon was noted
by \cite{efro:2008} in some work on doubly standardized matrices
of microarray data.
The mean squared correlation is comparable in size
to what we would get with independent sampling. 
That is, we are able to balance all GEOs and all brands
exactly without paying a high cost on these correlations.

The rest of this section considers classical designs
that do not apply to our situation and then considers
when designs that meet our secondary goals can be constructed.
Some readers might prefer to skip to Section~\ref{sec:regression}
which discusses a simulated example.

\subsection{Designs derived from a BIB}\label{sec:bib}
In a BIB, one compares $B$ quantities in blocks of size $s<B$
and every pair of quantities appears together in the same number
of blocks.
A BIB with block size $s=B/2$ and one block per GEO might
be repurposed for multi-brand experiments 
by making the $B/2$ elements of
each block correspond to brands given the treatment level.
A small example with $B=4$ brands and $G=6$ GEOs looks like
this
$$
\kbordermatrix{&B_1&B_2&B_3&B_4\\
G_1 &+ & +& \cdot & \cdot \\
G_2 &+ & \cdot & + & \cdot \\
G_3 &+ & \cdot & \cdot & +  \\
G_4 & \cdot & + & + & \cdot\\
G_5 & \cdot & + & \cdot & +\\
G_6 & \cdot & \cdot & + & +\\
}
$$
where a $+$ indicates that the given brand gets the
treatment in the given GEO.  The problem is that
GEOs $1$ and $6$ are exact opposites as are GEOs $2$ and $5$
and GEOs $3$ and $4$. Similarly, for any pair of brands 
the matrix
$$
\kbordermatrix{&+&\cdot\\
+ & 1 & 2 \\
\cdot & 2 & 1 \\
}
$$
gives the number of GEOs at each treatment combination. 
We know from Theorem~\ref{thm:nocando} 
that
equal numbers in all four configurations cannot be attained.
Here we see that for this BIB any two brands are more
likely to be at opposite treatment versus control settings
than at the same level.

\subsection{Designs derived from a Hadamard matrix}\label{sec:hadamard}

A Hadamard matrix \citep{heda:sloa:stuf:2012}
$H$ is an $n\times n$ matrix with elements
$\pm 1$ satisfying $H^\tran H=HH^\tran = I_n$.
An example Hadamard matrix with $n=8$ is depicted here:
$$
\kbordermatrix{\\
  & +      & +      & +      & +      & +      & +      & +      & +    \\
  & +    & \cdot  & +    & \cdot  & +    & \cdot  & +    & \cdot\\
  & +      & +    & \cdot& \cdot  & +      & +    & \cdot& \cdot\\
  & +    & \cdot& \cdot  & +      & +    & \cdot& \cdot  & +    \\
  & +      & +      & +      & +    & \cdot& \cdot& \cdot& \cdot\\
  & +    & \cdot  & +    & \cdot& \cdot  & +    & \cdot  & +    \\
  & +      & +    & \cdot& \cdot& \cdot& \cdot  & +      & +    \\
  & +    & \cdot& \cdot  & +    & \cdot  & +      & +    & \cdot\\
}.
$$

Suppose that we use $+$ for treatment  and $\cdot$
for control, and use columns of the design for brands
and rows for GEOs. Column $1$ is not suitable because
it describes a brand that is at the treatment level in
all GEOs.  In  applications, the first column of a Hadamard
matrix corresponds to the intercept term, not one of
the treatment variables, and so we might use the
last $n-1$ columns.

Row $1$ of the matrix above is not suitable as it describes a GEO
that is in the treatment group for all brands.
We can always reverse the sign in $3$ of the $7$ columns
and get a new design.  If we reverse columns 2,3,4 then
row 5 will be all $-1$'s (after the intercept column).
Certain other reversal choices will not produce a degenerate row
but will affect the number of $+1$s in the rows.

Hadamard matrices are potentially useful but require
special conditions. They only exist for $n=1,2$ 
(which are unsuitable) or $n=4m$ for certain
positive integers $m$.  
There are only $12$ integers $m<500$
for which no Hadamard matrix of order $n=4m$
is known.  See \cite{djok:golu:kots:2014}
who shortened that list from $13$ integers
by solving the case $m=251$.

A more serious problem is
that dropping the first column of a Hadamard matrix
and toggling the signs of some columns is only useful
if $B-1=G=4m$ for some $m$. One could drop the first
row too, yielding a design for $B=G=4m-1$ which has
near balance for each brand and each GEO.
But both of these choices impose unwanted restrictions
on $B$ and $G$. In principal one could
take a $G\times B$ submatrix of the last $n-1$
rows and columns of a Hadamard matrix but
then the result is even farther from the desired
balance of having each brand get the control treatment
in $G/2$ GEOs and each GEO delivering the control treatment
to each of $B/2$ brands.

\subsection{Constraints}\label{sec:constraints}

When $B\ge2$ and $G\ge2$ are both even then the
design matrix we want is $B\times G$ binary
matrix with $B/2$ ones in each column and
$G/2$ ones in each row.  Such matrices always
exist.  We would also like, when possible, to
have no two rows or columns be identical, or
to be opposite of each other.  

\begin{definition}
Two vectors $v_1,v_2\in\{-1,1\}^k$ have a collision
if either $v_1=v_2$ or $v_1=-v_2$.
A matrix $X\in\{-1,1\}^{n\times p}$ has no collisions if
no two of its rows have a collision and no two of its
columns has a collision.
\end{definition}

\begin{definition}
  A matrix $X\in\{-1,1\}^{n\times p}$ is balanced if
  each row sums to $0$ and each column sums to $0$.
\end{definition}

Our design uses balanced binary matrices.
Ideally we would like our design matrix to be
free of collisions. This secondary
constraint cannot always be met.
For any even number $B$ there are only
${B\choose B/2}$ different binary
vectors having exactly $B/2$ ones. Because we don't
want duplicates or opposite pairs we must
have $G \le {B\choose B/2}/2$, and conversely
$B \le {G\choose G/2}/2$. 

First, if $B=2$ then any pair of GEOs must get
either the exact same or exact opposite treatment,
and similarly for brands when $G=2$.
So when $\min(B,G)=2$ collisions will occur.

\begin{theorem}
Let $B\ge2$ and $G\ge 2$ be even numbers. If 
$\min(B,G)\le 4$ then there is no balanced binary
$G\times B$ matrix without collisions.
If $B=G=6$ or $B=G=8$, then there is such a matrix.
\end{theorem}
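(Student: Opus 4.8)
The plan is to prove the two assertions by completely different means: a short counting argument for the impossibility half, and explicit circulant constructions for the existence half. Throughout I will use that transposing a matrix preserves both balance and the absence of collisions, so I may interchange the roles of $B$ and $G$ whenever convenient.

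For the impossibility claim I would invoke the inequality $G \le \binom{B}{B/2}/2$ recorded above (together with its transpose $B \le \binom{G}{G/2}/2$), which holds because a collision-free list of balanced rows of length $B$ can use at most one vector from each of the $\binom{B}{B/2}/2$ opposite pairs of balanced vectors. By transposition I may assume the smaller dimension is $B$, so $B\in\{2,4\}$ and $G\ge B$ is even. If $B=2$ the bound forces $G\le \binom{2}{1}/2 = 1$, and if $B=4$ it forces $G\le \binom{4}{2}/2 = 3$; each contradicts $G\ge B$. Hence no balanced collision-free $G\times B$ matrix exists once $\min(B,G)\le 4$.

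For the existence claim with $B=G=n\in\{6,8\}$ I would build a circulant. Let $v\in\{-1,1\}^n$ be a balanced generating vector and set $X_{ij}=v_{(j-i)\bmod n}$. Every row is then a cyclic shift of $v$ and every column a cyclic shift of the reversal of $v$, so $X$ is automatically balanced. The key simplification is that circulancy collapses the four collision conditions (rows equal, rows opposite, columns equal, columns opposite) into two conditions on $v$ alone: (i) $v$ is aperiodic, so all $n$ shifts are distinct and no two rows, and no two columns, can coincide; and (ii) no cyclic shift of $v$ equals $-v$, which rules out opposite pairs among both the rows and the columns. It then suffices to exhibit such a $v$ in each case: I would take $v=(+,+,-,+,-,-)$ for $n=6$ and $v=(+,+,+,-,+,-,-,-)$ for $n=8$, and check directly that each is balanced, aperiodic, and has no cyclic shift equal to its negation.

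The impossibility half is essentially immediate once the counting bound is cited, so the real work sits in the existence half. There the conceptual step is recognizing that circulancy reduces the row and column requirements to the single pair of conditions (i) and (ii) on $v$; after that, the only remaining obstacle is the finite search for a generating vector satisfying both, which is a brief verification for $n=6$ and $n=8$.
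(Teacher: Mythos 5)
Your proof is correct but follows a genuinely different route from the paper's in both halves. For impossibility, the paper argues by cases: $\min(B,G)=2$ is dispatched because balance forces the second row (or column) to be the negative of the first; the $4\times4$ case gets a bespoke hand analysis (sort the columns so the first row is $(+,+,\cdot,\cdot)$, note each later row must put exactly one $+$ in the first two columns, and derive a contradiction from the column sums); and $G\times 4$ with $G\ge6$ is handled by a separate counting remark. You instead apply the single counting bound $G\le\binom{B}{B/2}/2$ (which the paper records just before the theorem but does not reuse inside its proof) together with transposition; this subsumes all of the paper's cases, including the $4\times4$ one, in two lines, since $B=2$ forces $G\le1$ and $B=4$ forces $G\le3$. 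For existence, the paper simply exhibits explicit non-circulant $6\times6$ and $8\times8$ matrices and asserts they are collision-free, leaving the pairwise checks of all rows and columns implicit; your circulant construction reduces the entire verification to two properties of one generating vector, namely aperiodicity and having no cyclic shift equal to its negation, and your vectors $(+,+,-,+,-,-)$ and $(+,+,+,-,+,-,-,-)$ do satisfy both (the only candidate periods are the proper divisors of $n$ and none works, and a direct scan of the $n$ shifts rules out matching the negation). Your route buys a much smaller verification and a reusable template: any balanced, aperiodic $v\in\{-1,1\}^n$ with no cyclic shift equal to $-v$ yields a collision-free balanced $n\times n$ design, whereas the paper's matrices are one-off exhibits.
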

\begin{proof}
If $G=2$ then the result is obvious because the
second row must then be the opposite of the first
one. Similarly if $B=2$, and so no such matrix
is available when $\min(B,G)=2$. 

For $G=B=4$, consider a $4\times4$
matrix of $+$ and $\cdot$ with exactly two $+$
symbols in each row and each column. We can sort
the columns so that the first row is
$\begin{pmatrix} +&+&\cdot&\cdot\end{pmatrix}$.
If the matrix has no collisions, then each subsequent row must have  
exactly one $+$ in the first two columns and one $+$ in the last two columns.
Because each column has two $+$'s, only one of the next three rows can have
a $+$ in column $1$ and only one of those rows can have a $+$ in column $2$.
There is therefore no way to put three more rows into the matrix without
having a collision.  As a result there is no collision free
balanced $4\times4$ matrix.  
There cannot be a collision free balanced binary $G\times 4$ matrix
with $G\ge6$ either.  There are only ${4\choose 2}=6$
distinct such rows and using them all would bring collisions.
Similarly, there are no $B\times 4$ collision free balanced binary
matrices.

When $B=G=6$, it is possible to avoid collisions. For instance, we
could use the matrix
\begin{align}\label{eq:bg6}
\begin{pmatrix}
+     & +    & +     & \cdot & \cdot & \cdot \\
+     & +    & \cdot & +     & \cdot & \cdot \\
+     & \cdot& \cdot & \cdot & +     & +     \\
\cdot & +    & \cdot & \cdot & +     & +     \\
\cdot & \cdot& +     & +     & +     & \cdot \\
\cdot & \cdot& +     & +     & \cdot & +     \\
\end{pmatrix},
\end{align}
which has no collisions.     For $B=G=8$ we could use
\begin{align}\label{eq:bg8}
\begin{pmatrix}
  + &  + &  + &  + &  \cdot &  \cdot &  \cdot &  \cdot \\
  + &  + &  \cdot &  \cdot &  \cdot &  \cdot &  + &  + \\
  + &  \cdot &  + &  \cdot &  + &  + &  \cdot &  \cdot \\
  + &  \cdot &  \cdot &  + &  \cdot &  + &  + &  \cdot \\
  \cdot &  + &  + &  + &  + &  \cdot &  \cdot &  \cdot \\
  \cdot &  + &  \cdot &  \cdot &  + &  + &  \cdot &  + \\
  \cdot &  \cdot &  + &  \cdot &  + &  \cdot &  + &  + \\
  \cdot &  \cdot &  \cdot &  + &  \cdot &  + &  + &  + \\
\end{pmatrix}.
\end{align}
\end{proof}

The first three columns of the matrix in~\eqref{eq:bg8} are the same as
in a classical $2^3$ factorial design. That matrix is not such
a design, and indeed that design would not have balanced rows.

Next we consider how to create larger $G\times B$ balanced binary
collision free matrices from smaller ones.
\begin{theorem}
Let $X\in\{-1,1\}^{B\times G}$ be a balanced binary matrix
  with no collisions.
Then there is a balanced binary matrix $\tilde X\in\{-1,1\}^{(B+4)\times (G+4)}$
with no collisions.
\end{theorem}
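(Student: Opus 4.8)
The plan is to border $X$ with four new rows and four new columns, writing
\[
\tilde X=\begin{pmatrix} X & A\\ C & D\end{pmatrix},\qquad A\in\{-1,1\}^{B\times4},\ C\in\{-1,1\}^{4\times G},\ D\in\{-1,1\}^{4\times4}.
\]
First I would record what balance demands. Since every row and column of $X$ already sums to $0$, the full matrix is balanced provided each row of $A$ sums to $0$, each column of $C$ sums to $0$, and $D$ together with the adjoining blocks leaves the new rows and new columns summing to $0$; it is cleanest to require $A$ and $C$ to be balanced in both rows and columns and $D$ to be balanced. A collision between two extended old rows $(X_i,A_i)$ and $(X_j,A_j)$ would force $X_i=\pm X_j$, which is excluded because $X$ has no collisions, and the same holds for two old columns. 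So old--old collisions come for free, and the real work is to kill old--new and new--new collisions.

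The key device is a sign-class marker placed in the four new coordinates. Partition the six balanced vectors of $\{-1,1\}^4$ into the three negation-closed classes
\[
\mathrm{I}=\{(+,+,-,-),(-,-,+,+)\},\quad \mathrm{II}=\{(+,-,+,-),(-,+,-,+)\},\quad \mathrm{III}=\{(+,-,-,+),(-,+,+,-)\}.
\]
I would force every row of $A$ and every column of $C$ to lie in class $\mathrm{I}$: give $A$ the rows $(+,+,-,-)$ and $(-,-,+,+)$ in equal numbers and $C$ the columns $(+,+,-,-)^{\tran}$ and $(-,-,+,+)^{\tran}$ in equal numbers, which at the same time balances $A$ and $C$. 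Then I would choose the corner $D$ so that every row \emph{and} every column of $D$ lies in $\mathrm{II}\cup\mathrm{III}$; one explicit choice is
\[
D=\begin{pmatrix} +&-&+&-\\ -&+&-&+\\ +&-&-&+\\ -&+&+&-\end{pmatrix},
\]
whose rows are two vectors of $\mathrm{II}$ and two of $\mathrm{III}$, and whose columns are likewise. With this in place, an old row $(X_i,A_i)$ and a new row $(C_k,D_k)$ can never collide: equality would need $A_i=D_k$ and oppositeness $A_i=-D_k$, and both fail because $A_i\in\mathrm{I}$ while $D_k\in\mathrm{II}\cup\mathrm{III}$ and each class is closed under negation. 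The identical argument in the last four rows rules out every old-column/new-column collision.

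It remains to rule out collisions among the four new rows and among the four new columns. With the half-and-half choices above, the rows of $C$ collapse to $r,r,-r,-r$ for a single balanced $r\in\{-1,1\}^G$, so the new rows are $(r,D_1),(r,D_2),(-r,D_3),(-r,D_4)$, and symmetrically the new columns are $(s,D^{1}),(s,D^{2}),(-s,D^{3}),(-s,D^{4})$ for a balanced $s\in\{-1,1\}^B$. I would then check the six row-pairs (and, by symmetry, the six column-pairs) directly against the displayed $D$: two new rows with the same top block are automatically non-opposite and are unequal because the relevant $D_k$ differ, while two with opposite top blocks are automatically unequal and are non-opposite precisely because $D_k\ne -D_{k'}$ for every cross pair of that $D$. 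This is a finite verification that the chosen $D$ passes.

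The main obstacle is exactly the single step of selecting $D$: because the same $4\times4$ block supplies both the row-markers $D_k$ and the column-markers $D^{l}$, its rows and its columns must \emph{simultaneously} avoid the reserved class $\mathrm{I}$, and it must also be internally varied enough that the bordered block carries no self-collisions. Exhibiting one matrix meeting both requirements, as above, settles the theorem. I would emphasize that this approach needs no balanced vectors beyond those already used in $X$: the four extra coordinates by themselves separate the new rows and columns from the old ones, so the construction works uniformly, including the delicate near-maximal case in which almost every balanced vector already appears in $X$.
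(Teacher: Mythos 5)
Your construction is correct, but it is genuinely different from the paper's. The paper borders $X$ with $\pm$ copies of two of its own rows and two of its own columns, and fills the $4\times4$ corner with a block pattern of a single sign $z$; ruling out old--new collisions then requires a case analysis (for $1<k\le B$ it falls back on the no-collision property of $X$, and for the row that was actually copied it uses the observation that a pattern of the form $(a,-a,b,-b)$ agrees with $(z,z,-z,-z)$ in exactly two of the four new coordinates). You instead never touch the rows or columns of $X$: you tag every old row and old column with a marker from the negation-closed class $\mathrm{I}$ of balanced vectors in $\{-1,1\}^4$, and build a corner $D$ whose rows and columns all lie in $\mathrm{II}\cup\mathrm{III}$, so that old--new collisions are excluded purely by class membership, independently of what $X$ looks like; only the six new-row pairs and six new-column pairs need a finite check against your explicit $D$ (which does pass: e.g.\ $D_1\ne D_2$, $D_3\ne D_4$, and $D_k\ne -D_{k'}$ for the four cross pairs, and likewise for columns). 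The one implicit hypothesis you use, that $B$ and $G$ are even so the rows of $A$ and columns of $C$ can be split half and half, is automatic from balance of $X$. Your separation argument is cleaner and more modular -- it does not depend on selecting two non-colliding rows and columns of $X$ -- while the paper's version has the practical advantage that the bordering vectors are free parameters (they can be chosen randomly or greedily to control correlations) and that the same template extends directly to the asymmetric growth step in equation~\eqref{eq:growitmore}, adding four rows but eight columns. Both constructions share the drawback noted in the paper that some of the newly added rows or columns are nearly collinear, giving correlations close to $\pm1$.
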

\begin{proof}
Let $r_1$ and $r_2$ be the first two rows of $X$, 
let $c_1$ and $c_2$ be the first two columns of $X$
and choose $z\in\{-1,1\}$.
Now let
\begin{align}\label{eq:growit}
X^* = \begin{pmatrix}
\phm X &  \phm c_1 & -c_1 & \phm c_2 & -c_2\\
\phm r_1 & \phm z& \phm z & -z & -z\\
-r_1 & \phm z& \phm z & -z & -z\\
\phm r_2 & -z & -z & \phm z & \phm z\\
-r_2 & -z & -z & \phm z & \phm z\\
\end{pmatrix}.
\end{align}
Every row and every column of $X^*$ is balanced
by construction. There are no collisions
among the first $B$ rows or first $G$ columns of $X^*$ because
there are none in $X$. 

Now we consider the last four rows of $X^*$.
Row $B+1$ does not collide with the last two rows
because $r_1$ does not collide with $r_2$. Rows $B+1$
and $B+2$ are opposite in their first $G$ columns
but they agree in the next $4$ columns so they do not collide.
By symmetry, this argument shows that there are no collisions
among the last four rows of $X^*$ or among the last four
columns.

It remains to check whether any of the new rows (or columns)
collide with any of the old ones.
Row $B+1$ of $X^*$ cannot collide with row $k$ of $X^*$ for any
$1<k\le B$ because $r_1$ does not collide with any of the corresponding
rows of $X$.
Rows $B+1$ and $1$ of $X^*$ agree in the first $G$ columns
but differ in exactly two of the last $4$ columns of $X^*$
so they do not collide.
Therefore row $B+1$ of $X^*$ does not collide with any of
the first $B$ rows. Row $B+2$ of $X^*$ equals $\tilde r_1$ in its
first $G$ columns.
Therefore it cannot collide with row $k$ of $X^*$ for any
$1<k\le B$. By construction it matches row $1$ in two of the new
columns and is opposite row $1$ in the other two.  It follows that
none of the last four rows of $X^*$ collide with any of the
first $B$ rows. By symmetry there are no collisions among any
of the last four columns of $X^*$ and any of the first $G$
columns.
\end{proof}

The Theorem above gives an approach to creating design
matrices. We start with a small
matrix and grow it by repeatedly applying equation~\eqref{eq:growit}. 
It is not necessary to grow $X$ via the first two
rows and columns.  It would work to choose any two distinct
rows or columns. For instance they could be chosen randomly
or chosen greedily to optimize some property of the resulting matrix.

Repeatedly applying equation~\eqref{eq:growit}
will give a nearly square matrix because it keeps
adding $4$ to both the number of rows and the number
of columns. We might want to have $G\gg B$.

We can grow the matrix by $4$ rows and $8$ columns via
\begin{align}\label{eq:growitmore}
X^* = \begin{pmatrix}
 X & \phm c_1 & -c_1 & \phm c_2 & -c_2 & \phm c_3 & -c_3& \phm c_4 & -c_4\\
\phm r_1 & \phm z_1& \phm z_1 & -z_1 & -z_1& \phm z_2& \phm z_2 & -z_2 & -z_2\\
-r_1 & \phm z_1& \phm z_1 & -z_1 & -z_1 &\phm z_2& \phm z_2 & -z_2 & -z_2\\
\phm r_2 & -z_1 & -z_1 & \phm z_1 & \phm z_1& -z_2& -z_2 & \phm z_2 & \phm z_2\\
-r_2 & -z_1 & -z_1 & \phm z_1 & \phm z_1& -z_2& -z_2 & \phm z_2 & \phm z_2\\
\end{pmatrix},
\end{align}
for any $z_1,z_2\in\{-1,1\}$ where $r_1$ and $r_2$ are
any two rows of $X$ and $c_1,\dots,c_4$ are any four
columns of $X$.
Equation~\eqref{eq:growitmore} adds four rows and eight
columns. The same idea could extend a $G\times B$ matrix
to a $3G\times (B+4)$ matrix,
tripling the number of columns (GEOs) while adding 
only four rows (brands).

The methods of this section show that there are some large collision
free designs.  
We find that starting with the matrix~\eqref{eq:bg6}
or~\eqref{eq:bg8} and growing it by repeatedly applying~\eqref{eq:growit}
yields designs that include some correlations very close to $\pm1$.
The scrambled checkerboard approach
tends to produce designs with smaller maximum absolute
correlation than the growth approach.
Also, numerically searching with that algorithm turns up $8\times 8$
designs but not $6\times 8$ designs, which we suspect do not exist.

\section{Regression results}\label{sec:regression}

The regression model~\eqref{eq:singlemodel} was simulated
with advertising effectiveness $\beta=5.0$ in $20$ GEOs
of which $10$ had increased spend equal to $1\%$ of the prior period's
sales.  
Further details are in Section~\ref{sec:simulation}.
Figure~\ref{fig:onesinglesim} shows one realization.
The simulation was done $1000$ times in total.
Then, using the same random seeds, the simulation was repeated with
increased spend of $0.5$\% instead of $1$\%.

\begin{figure}[t]
\centering
\includegraphics[width=0.99\hsize]{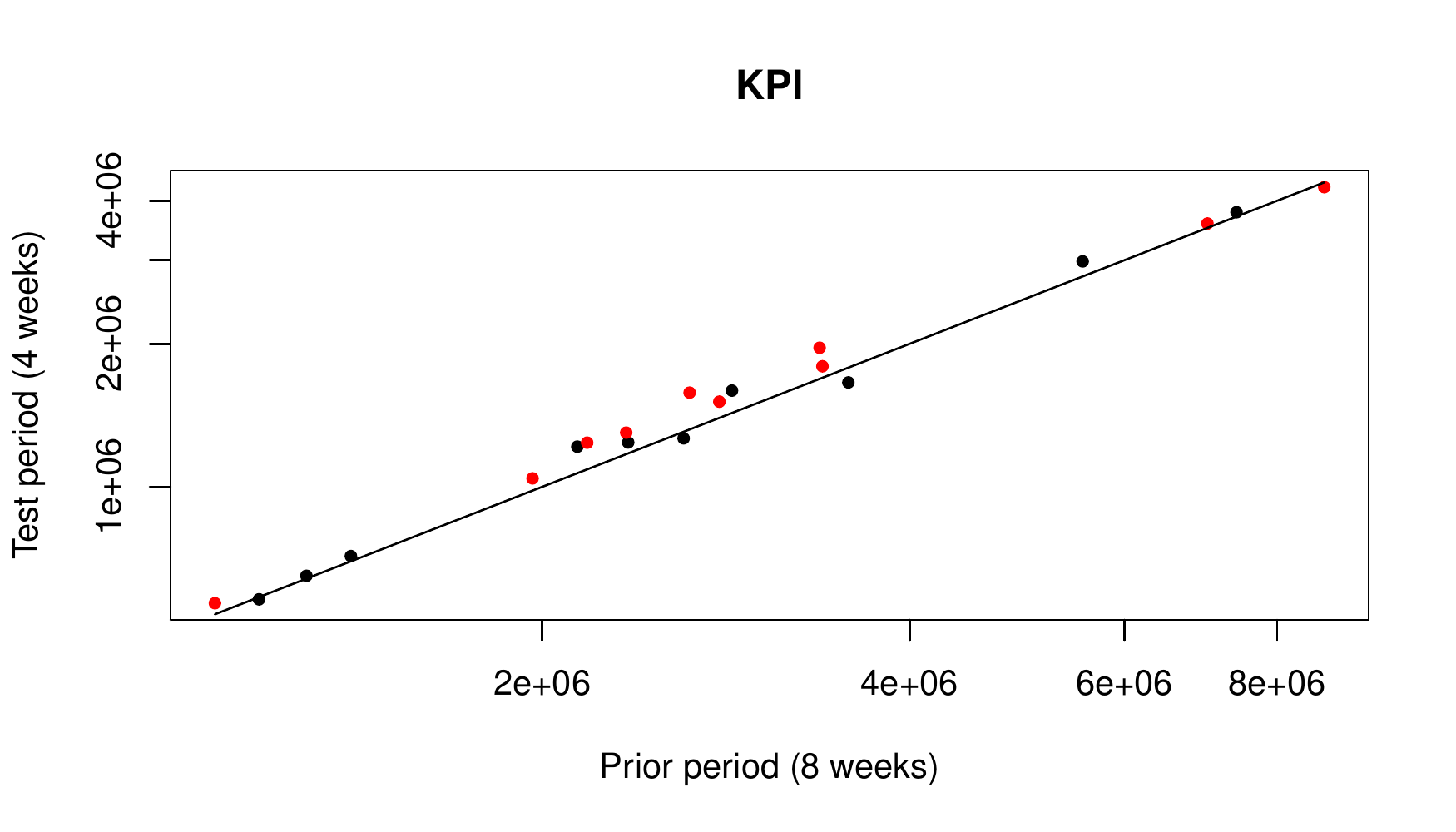}
\caption{\label{fig:onesinglesim}
One realization of a single brand simulation. 
Treatment GEOs are in red, control in black.
The reference line is at $y=x/2$ because the
test period has half the length of the prior period.
}
\end{figure}

For each simulated data set, weighted least squares regression
was used. The weights were proportional to $(1/Y^\pre)^2$, making
them inversely proportional to variance. Unweighted regression 
does not give reliable confidence intervals and $p$-values in
this setting.

Some results are plotted in Figure~\ref{fig:single}
and some numerical summaries are in Table~\ref{tab:singleoutput}.
The top panels of Figure~\ref{fig:single} 
show histograms of $1000$ two-sided $p$-values
for $H_0{:}\beta=0$. This null was rejected $28.6$\% of the
time for the experiment with a smaller treatment size,
and $81.3$\% of the time for 
the one with a larger treatment size.
The middle panels show histograms of twice the standard
error of $\hat\beta$, roughly the distance from $\hat\beta$
to the edge of a $95$\% confidence interval. 
At $0.5$\% treatment this uncertainty
averaged $6.68$ while at $1$\% it averaged $3.34$, just over $66$\%
of the true value $5$.
The root mean squared error in $\hat\beta$ was $3.21$ for
small treatment differences and $1.61$ for large ones.
The bottom panels show histograms of the estimates $\hat\beta$
for only those simulations in which $H_0$ was rejected at the $5$\% level.
The average estimated effect was $8.64$ for the smaller treatment size
and $5.51$ for the larger one.

\begin{figure}[t]
\centering
\includegraphics[width=0.99\hsize]{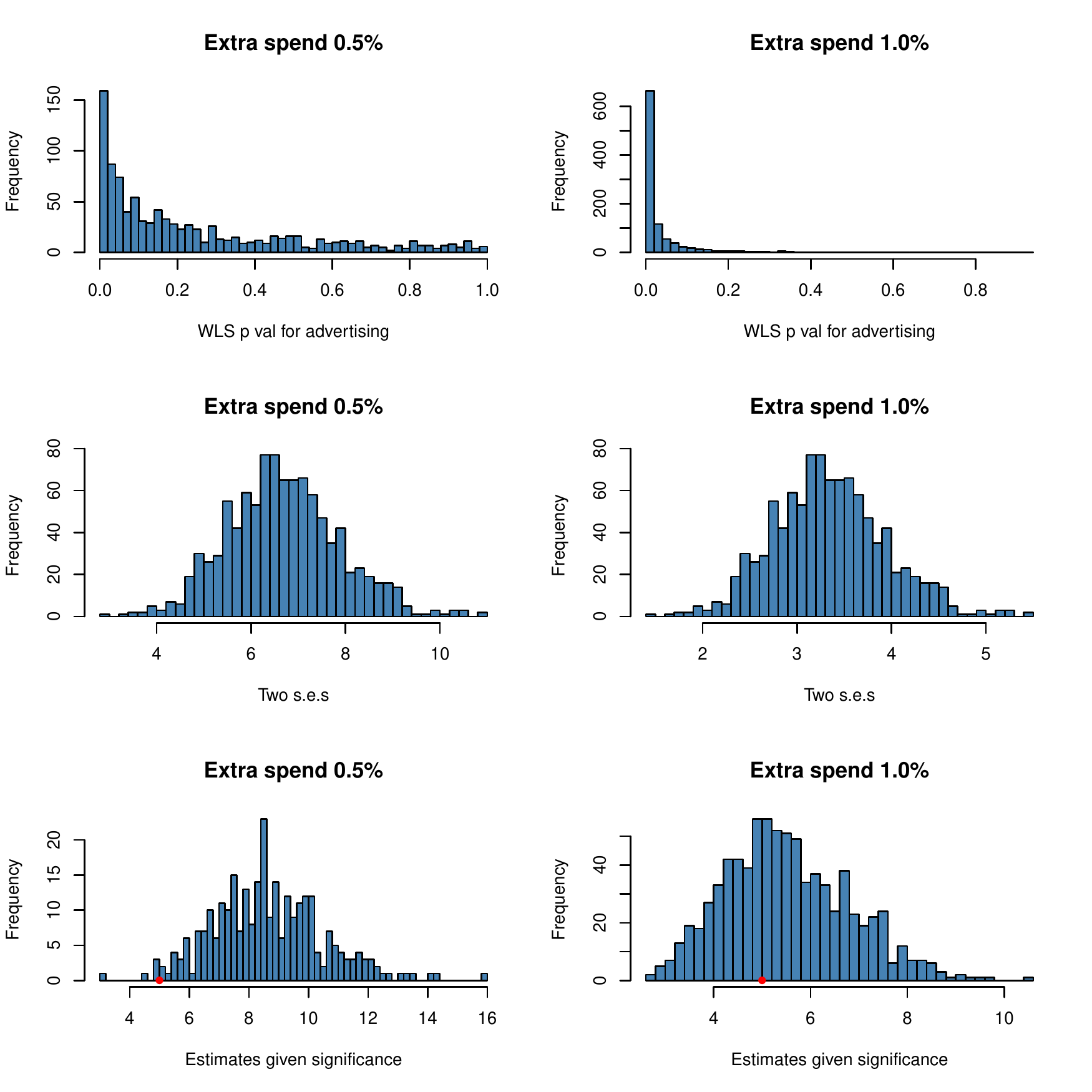}
\caption{\label{fig:single}
Results from repeated single brand simulations.
}
\end{figure}

The smaller treatment size has very low power, very wide confidence
intervals, and in those
instances where it detects an advertising effect, it gives a
substantial overestimate of effectiveness.
The larger treatment size has greater power and only slight overestimation
of $\beta$ when it is significant. But it still
yields a wide confidence interval for $\beta$.

\begin{table}
\centering
\begin{tabular}{lcccc}
\toprule
Trt & $\wh\Pr(p\le 0.05)$ & $2\mathrm{se}(\hat\beta)$ &$\hat\e((\hat\beta-\beta)^2)^{1/2}$ & $\hat\e(\hat\beta\mid p\le 0.05)$\\
\midrule
0.5\%  &0.29 & 6.68 &3.21&      8.64\\
1.0\%  &0.81 & 3.34 &1.61&      5.51\\
\bottomrule
\end{tabular}
\caption{
\label{tab:singleoutput}
Output summary of $1000$ simulations of the single
brand experiment.
}
\end{table}

\section{Multibrand experimental results}\label{sec:multibrandsimu}

The multibrand setting was simulated with $B=30$ brands
over $G=20$ GEOs.  Treatment versus control was
assigned with scrambled checker designs from 
Section~\ref{sec:design}.
The effectiveness of brand $b$ was generated
from $\beta_b\sim\dnorm(5,1)$, so advertising returns
are usually in the range from $3$ to $7$.

\subsection{Shrinkage estimation of $\beta_b$}\label{sec:shrinkage}
We write $\hat\beta_b$ for the
least squares estimate of $\beta_b$ from
brand $b$ data and set
$\hat{\bar\beta} = (1/B)\sum_{b=1}^B\hat\beta_b$.
We can estimate $\beta_b$ by a shrinkage estimator
formed as a weighted average of $\hat\beta_b$
and $\hat{\bar\beta}$.
\citet[Section 4]{xie:kou:brow:2012} propose estimators
of the form
\begin{align}\label{eq:xieshrink}
\tilde\beta_b = \frac{\lambda}{\var(\hat\beta_b)+\lambda}\hat\beta_b
+\frac{\var(\hat\beta_b)}{\var(\hat\beta_b)
+\lambda}\hat{\bar\beta}
\end{align}
for a parameter $\lambda$ that must be chosen. The larger $\lambda$
is, the more emphasis we put on brand $b$'s own data instead of
the pooled data. For brands with large $\var(\hat\beta_b)$, more
weight is put on the pooled estimate $\hat{\bar\beta}$.
Xie et al.'s~(2012) \nocite{xie:kou:brow:2012}
main innovation
is in shrinkage methods for data of unequal variances as we have here.
To use their method we replace $\var(\hat\beta_b)$
by unbiased estimates $\wh\var(\hat\beta_b)$ taken from the linear model output,
and choose $\lambda$.

\cite{xie:kou:brow:2012} give theoretical support for choosing
$\lambda$ to minimize the following unbiased estimate of
the expected sum of squared errors
\begin{align}
\begin{split}\label{eq:sureg}
\mathrm{SURE}^G(\lambda)
&= 
\frac1B\sum_{b=1}^B\frac{ \var(\hat\beta_b)^2}{(\var(\hat\beta_b)+\lambda)^2}
(\hat\beta_b-\hat{\bar\beta})^2\\
&\qquad
+\frac1B\sum_{b=1}^B\frac{ \var(\hat\beta_b)}{\var(\hat\beta_b)+\lambda}
\Bigl(\lambda - \var(\hat\beta_b)+\frac2B\var(\hat\beta_b)\Bigr).
\end{split}
\end{align}
This function is not convex in $\lambda$ but a practical way
to choose $\lambda$ is to evaluate 
$\mathrm{SURE}^G$ on a grid of, say $1001$, $\lambda$ values.
Letting the typical weight on $\hat\beta_b$ take values
$u\in\{0,1/1000,2/1000,\dots,1\}$
we use 
$$\lambda = \frac1B\sum_{b=1}^B\var(\hat\beta_b)\times \frac{u}{1-u}$$
where $u=1$ means $\lambda=\infty$ which simply means
$\tilde\beta_b=\hat\beta_b$.

We can measure the efficiency gain from shrinkage via
$$
\mathrm{Eff}=\frac
{ \frac1B\sum_{b=1}^B (\hat\beta_b-\beta_b)^2}
{ \frac1B\sum_{b=1}^B (\tilde\beta_b-\beta_b)^2}.
$$
Figure~\ref{fig:efficiency} shows a histogram of this
efficiency measure in $1000$ simulations. On average
it was about $3.17$ times as efficient to use shrinkage
when the experimental treatment is to increase advertising
by $1$\% of prior sales. For smaller experiments, at $0.5$\%
of sales, the average efficiency gain was $7.82$.
For each given brand $b$, the information from $B-1$ 
other brands' data yields a big improvement in accuracy.
Recall that $\beta_b\simiid \dnorm(5,1)$.
The gain from shrinkage would be less if the underlying $\beta_b$ were
less similar and greater if they were more similar.

\begin{figure}%
\centering
\subfloat[Experimental spend $1$\%.\label{subfig:1pc}]{%
\includegraphics[width=0.49\textwidth]{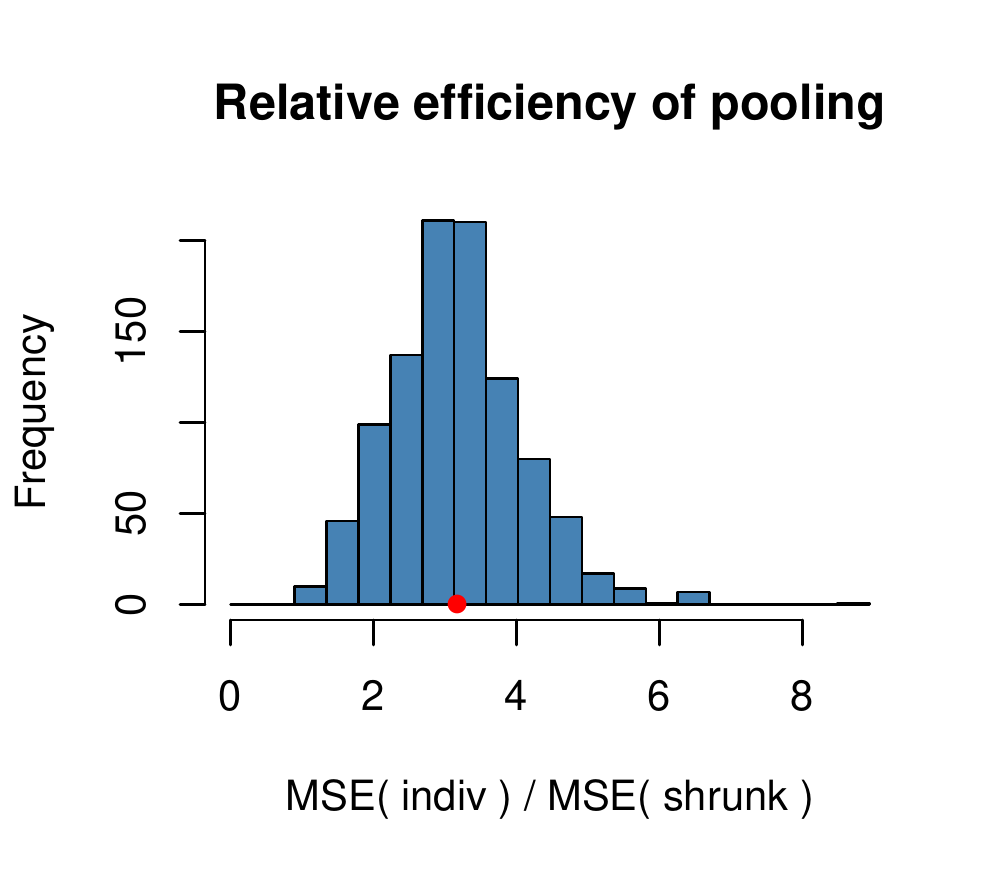}
}
\hfill
\subfloat[Experimental spend $0.5$\%.\label{subfig:0p5pc}]{%
\includegraphics[width=0.49\textwidth]{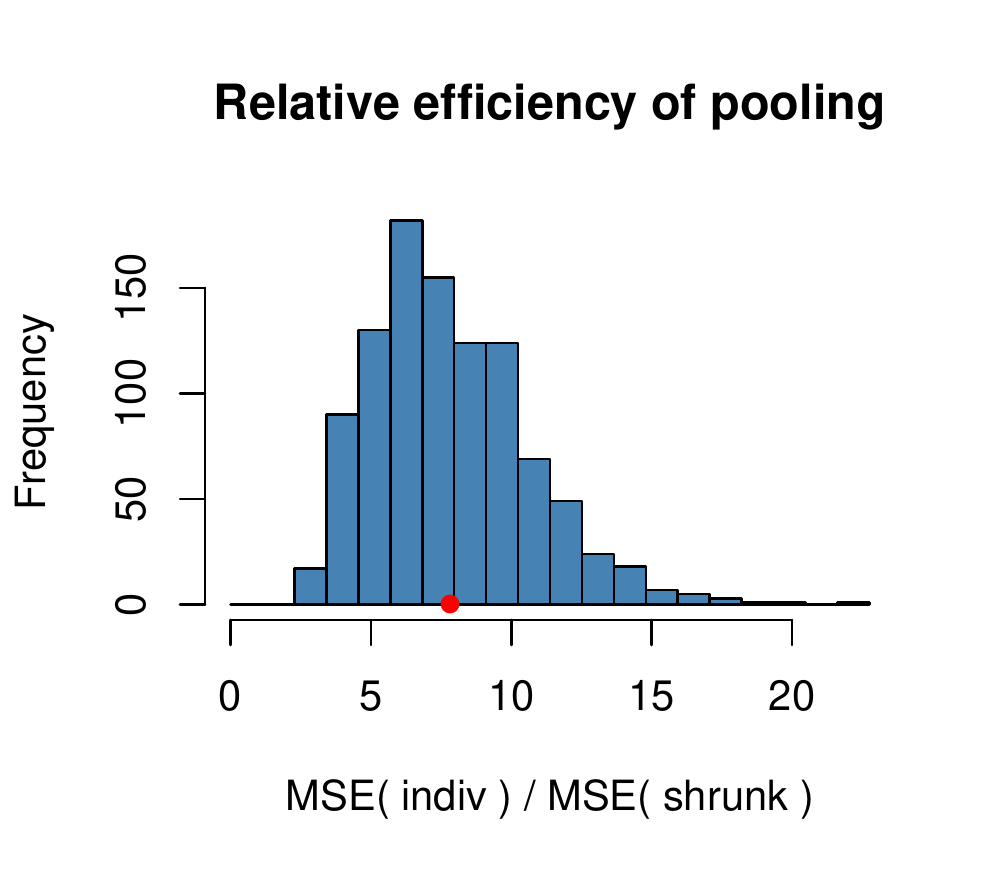}
}
\caption{Relative efficiency of shrinkage
estimates compared to single brand regressions.
}
\label{fig:efficiency}
\end{figure}

\subsection{Average return to advertising}\label{sec:avgreturn}
The quantity $\bar\beta$ measures the overall 
return to advertising averaged over all brands. Although
individual returns $\beta_b$ are more informative, their average
can be estimated much more reliably. In small experiments where
some individual $\hat\beta_b$'s are not well determined
it may be wiser to base decisions on $\hat{\bar{\beta}}$.

\begin{figure}[t]%
\centering
\subfloat[Experimental spend $1$\%.\label{subfig:1pc}]{%
\includegraphics[width=0.49\textwidth]{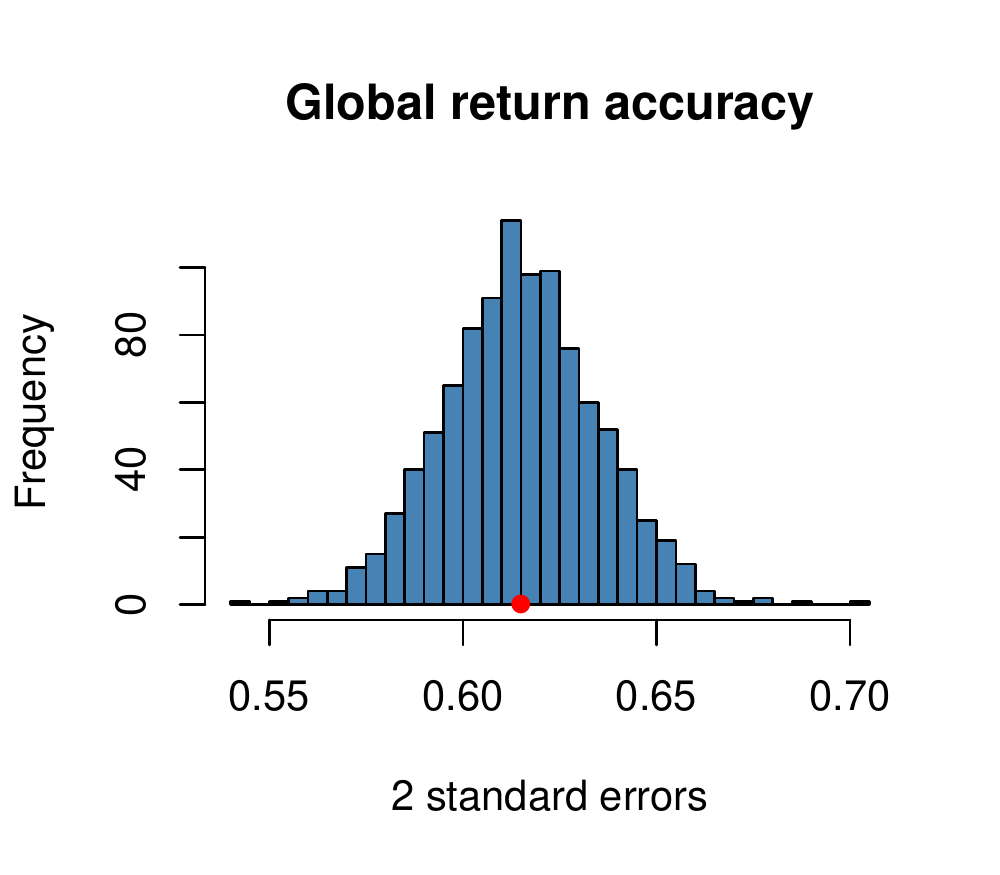}
}
\hfill
\subfloat[Experimental spend $0.5$\%.\label{subfig:0p5pc}]{%
\includegraphics[width=0.49\textwidth]{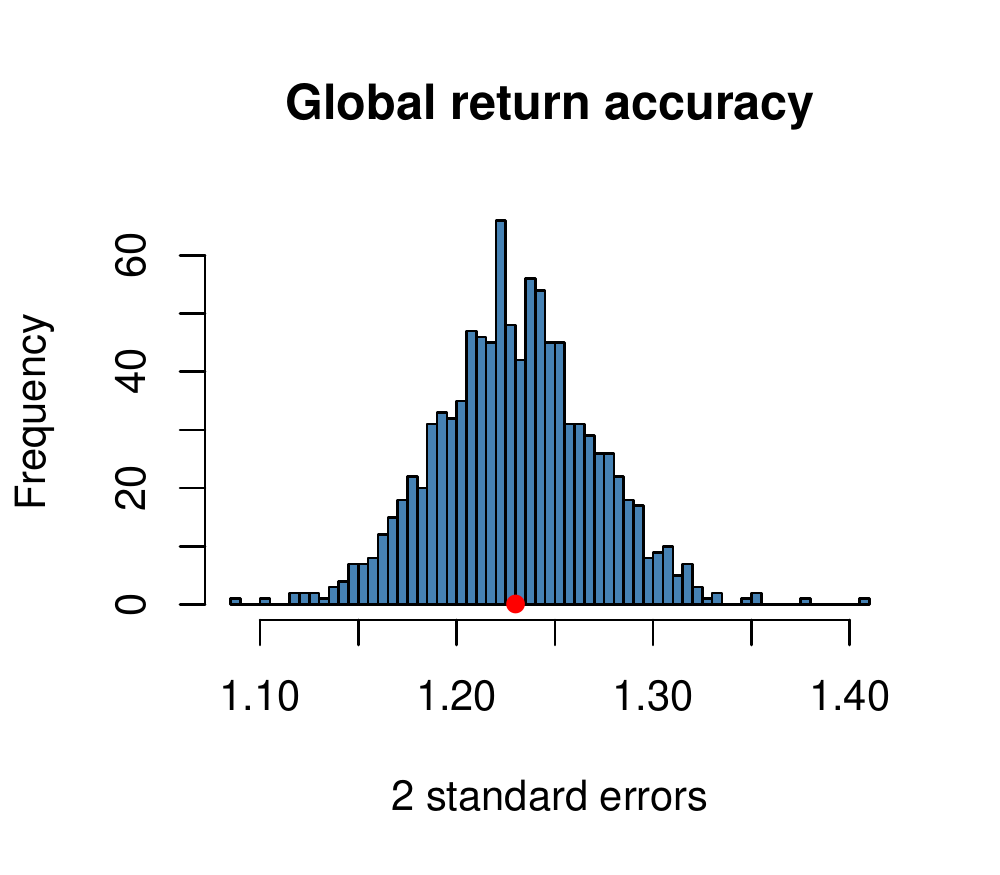}
}
\caption{Two standard errors of $\hat{\bar\beta}$.
}
\label{fig:twose}
\end{figure}

We can estimate $\bar\beta$ by 
$\hat{\bar\beta} = (1/B)\sum_{b=1}^B\hat\beta_b$
and then using the individual regressions compute
$\wh\var(\hat{\bar\beta})=B^{-2}\sum_{b=1}^B\wh\var(\hat\beta_b)$.
Figure~\ref{fig:twose} shows histograms of
$2(\wh\var(\hat{\bar\beta}))^{1/2}$.
Table~\ref{tab:twose} compares average
values of twice the standard
error for $\hat\beta$ in a single brand experiment
with twice the standard error for $\hat{\bar\beta}$
in a multibrand experiment. As we might expect
the multibrand standard errors are roughly $\sqrt{B}=\sqrt{30}$
times smaller. Similarly, doubling the spend roughly
halves the standard error.  

\begin{table}
\centering
\begin{tabular}{cccc}
\toprule
\multicolumn{2}{c}{1\% spend}&\multicolumn{2}{c}{0.5\% spend}\\
\midrule
Single & Multiple & Single & Multiple \\
\midrule
$3.34$ & $0.62$ & $6.68$ & $1.23$ \\
\bottomrule
\end{tabular}
\caption{Average over simulations of two standard
errors for $\hat\beta$ (single brands) and
$\hat{\bar\beta}$ (multiple brands).
}
\label{tab:twose}
\end{table}


\section{Simulation details}\label{sec:simulation}

In each simulation, the design was generated by 
the scrambled checker algorithm described in Section~\ref{sec:design}.
Then the data were sampled from the Gamma distributions described
here.

\subsection{Gamma distributions}
The Gamma distribution 
has a standard deviation 
proportional to its mean, matching a pattern
in the real sales data. 
When the shape parameter is $\kappa>0$ the Gamma
probability density function is
${x^{\kappa-1}e^{-x}}/{\Gamma(\kappa)}$ for $x>0$.
To specify a scale parameter, we multiply $X\sim\dgam(\kappa)$ by
the desired scale $\theta$.

The random variable $\theta X$ has mean $\kappa\theta$
and variance $\kappa\theta^2$, leading to a coefficient
of variation equal to $1/\sqrt{\kappa}$ for any $\theta$.
The shape $\kappa=1/\cv^2$ yields a Gamma random variable
with the desired coefficient of variation.

The coefficient of variation for the average
of $n$  observations from one GEO (e.g., $n_\post=4$ in the test
period and $n_\pre=8$ in the background period)
is approximately $1/\sqrt{n}$ times the coefficient of variation
of a single observation. The coefficient of variation
for single observations from a set of $8$ week trial
periods was about $0.15$ while that for $4$ week followup
periods was about $0.1$.  These figures are based
on aggregates over GEOs that were very similar for 
all of the different brands.
An $8$ week trial period has within it more
seasonality than a $4$ week period has, and
so it is reasonable that we would then 
measure a larger coefficient of variation.

To simulate with a specific coefficient of variation
we use $\kappa = n/\cv^2$. This leads to
$\kappa_\pre = 8/0.15^2\doteq 356$ and
$\kappa_\post = 4/0.1^2 =400$.

Gamma random variables are never negative
which gives them a further advantage over simulations
with Gaussian random variables.
For the specific parameter choices above, the shape
parameters are large enough that the Gamma random variables
are not strongly skewed (their skewness is $2/\sqrt{\kappa}$). 
A Gaussian distribution might
give similar results. The Gamma distribution is useful because
it can be used to simulate either strongly or mildly skewed 
data that are always nonnegative.

\subsection{Data generation}
For a single brand experiment, the data
are generated as follows.
First the underlying sizes of the GEOs
were sampled as $S_g = 10^{7-U_g}$
where $U_g\sim\dustd(0,\phi)$, for $g=1,\dots,G$.
The quantity $S_g$ is interpreted as a size measure
for GEO $g$.  We will use it as the expected
prior sales, which is then roughly proportional
to the number of customers in GEO $g$.
Choosing $\phi =1$ means that we consider GEOs ranging in size
by a factor of about $10$ from largest to smallest.

The prior KPIs are generated as
$$
Y_g^\pre \simind S_g \times {\dgam(\kappa_\pre)}/{\kappa_\pre},
$$
where $\kappa_\pre = n_\pre/\cv_\pre^2$.
Then $\e(Y_g^\pre)=S_g$. 
Let the spending level in the experimental period
be $X_g^\post$ in GEO $g$. 
Then the KPI in the experimental period is
generated as
$$
Y_g^\post \simind \frac{n_\post}{n_\pre}\times S_g\times\dgam(\kappa_\post)/\kappa_\post + X_g^\post\beta,
$$
where $\kappa_\post = n_\post/\cv_\post^2$.
The factor $n_\post/n_\pre$ adjusts for different sizes of
prior and experimental observation windows.
The term $X_g^\post\beta$ is the additional KPI attributable
to advertising.

\subsection{Checkerboard designs}\label{sec:checkerdone}

The scrambled checkerboard design was run
for $2\times G\times B\times 25=30{,}000$ steps. The expected
number of flips for each pixel in the image is $25$.
This is much more than the number at which the root mean
squared correlations stabilize.

\section{A fully Bayesian approach}\label{sec:fullbayes}
Stein shrinkage is an empirical Bayes approach. Here we consider a fully Bayesian alternative.
When it comes to pooling information together from observations that arise from a common model but corresponding to different sets of parameters, Bayesian hierarchical models arise as a natural solution.

One advantage of the Bayesian approach is that it allows us to present the uncertainty in our estimates.  For each brand $b$, we can get an interval $(L_b,U_b)$ such that
$\Pr( L_b \le \beta_b \le U_b\mid \text{data} ) = 0.95$ 
without making any (additional) assumptions.
These posterior credible intervals are easier to compute than confidence
intervals from Stein shrinkage.  We can also use posterior credible intervals
at the planning stage. To do that, we simulate the data several times and
record how wide the posterior credible intervals are.  If they are too wide
we might add more GEOs or increase the differential spend $\delta$.

\subsection{A hierarchical model}
We consider model \eqref{eq:multimodel} in a Bayesian context, which translates as follows:
\begin{align}
  \mu_{gb} &:= \alpha_{0b} + \alpha_{1b}Y^\pre_{gb}+ \beta_b X_{gb}^\post, \quad b=1,\dots,B,\ g=1,\dots,G  \label{eq:hierarchicalprior1} \\
  Y^\post_{gb} &\sim \dnorm\Bigl(\mu_{gb}, \bigl({\sigma_b}/{Y^{\pre}_{gb}}\bigr)^2\Bigr), \quad b=1,\dots,B,\ g=1,\dots,G  \label{eq:hierarchicalprior2} \\
  \sigma_b^2 &\sim \mathcal{IG}(10^{-3}, 10^{-3}), \quad b=1,\dots,B  \label{eq:hierarchicalprior3} \\
  \beta_b &\sim \dnorm(\beta, \sigma_{\beta}^2), \quad b=1,\dots,B  \label{eq:hierarchicalprior4} \\
  \sigma_{\beta}^2 &\sim \mathcal{IG}(0.5, 0.5),\quad\text{and},  \label{eq:hierarchicalprior5} \\
  \beta &\sim {\boldsymbol{1}}_{\mathbb{R}}.  \label{eq:hierarchicalprior6}
\end{align}

Definitions \eqref{eq:hierarchicalprior1}, \eqref{eq:hierarchicalprior2} and \eqref{eq:hierarchicalprior3} the mirror model \eqref{eq:multimodel} that we described earlier. Notice that we signal the weighted regression explicitly in \eqref{eq:hierarchicalprior2}. The hierarchical Gaussian prior \eqref{eq:hierarchicalprior4} on the coefficients \(\beta_b\) involves two new hyperparameters \(\beta\) and \(\sigma_{\beta}^2\) which respectively represent the overall mean and variance of all returns \(\beta_b\). Since we assumed in the beginning that all brands had somewhat similar returns, we choose a semi-informative prior \eqref{eq:hierarchicalprior5} on \(\sigma_{\beta}^2\) that favors plausible, not too large, values. For our simulations we use a flat prior \eqref{eq:hierarchicalprior6} on \(\beta\) relying on the data to drive the inference. One could also use a Gaussian prior for \(\beta\), crafting its mean and variance based on the prior knowledge of the brands at hand.

\subsection{Simulation details}\label{sec:bayesdetails}
The data were generated according to the procedures described in Section \ref{sec:simulation}. Samples were collected from the posterior distribution using STAN software \citep{stan:2016}.

We simulated many different conditions and consistently found that
the Stein and Bayes estimates were close to each other. In this section we
present just one simulation matching parameters of interest to some of
our colleagues at Google. We consider $G=160$ GEOs,  only $B=4$
brands and we take advertising effectiveness $\beta_b$ to be $\dnorm(1,1)$.  
Using $\e(\beta_b)=1$ produces a setting where a dollar of advertising typically brings
back a dollar of sales in the observation period. That implies a short
term loss with an expected longer term benefit from adding or retaining customers.
Taking the standard deviation of $\beta_b$ to one implies very large
brand to brand variation. We still see a benefit from pooling only $4$
brands as diverse as that.
The amount of extra spend is set to $1$\% of prior sales ($\delta=0.01$).
We repeated this simulation $1000$ times.

In this setting with $160$ GEOs and $4$ brands there will
always be some GEOs that get the exact same treatment for all $4$
brands.

\subsection{Agreement with shrinkage estimates}

Figure~\ref{fig:bayes_error_diff} 
compares the RMSE $[(1/B)\sum_{b=1}^B(\hat\beta_b-\beta_b)^2]^{1/2}$
for Stein and Bayes estimation in $1000$
simulations with $B=4$. The methods have very similar accuracy.
For high brand to brand standard deviation $\sigma_b=1.0$, there is a
slight advantage to Bayes.  For lower brand to brand standard deviation $\sigma_b=0.25$,
there is a small advantage to Stein.  The Bayesian estimate was at a disadvantage
there because the prior variance was $1/\mathrm{Gamma}(0.5,0.5)$ giving $\sigma_b$
a median of about $1.48$. This shows that the Bayesian estimate is not overly
sensitive to our widely dispersed prior distribution on $\sigma_b$.

\begin{figure}[t!]%
\centering
\subfloat[Stein versus Bayes RMSE, $\sigma_b=1$.\label{subfig:bayes_error}]{%
\includegraphics[width=0.46\textwidth]{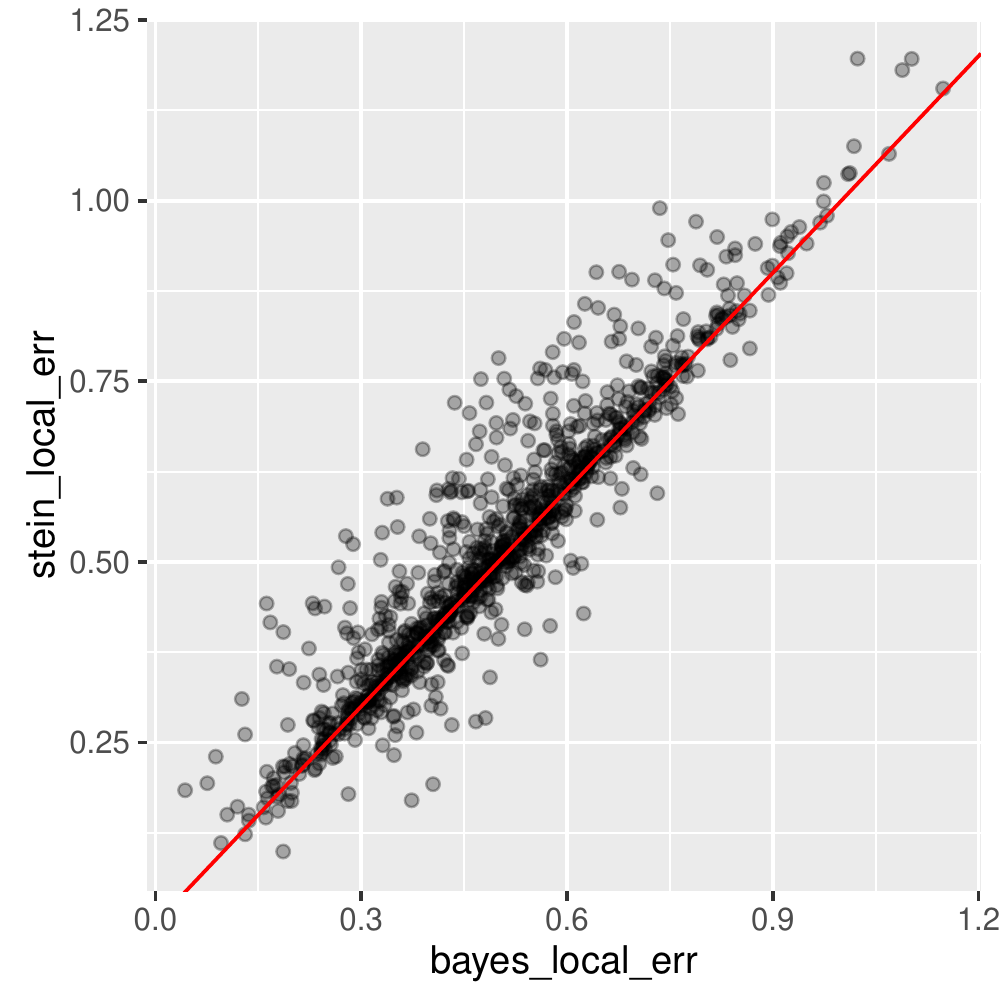}}
\subfloat[Stein minus Bayes RMSE, $\sigma_b=1$.\label{subfig:bayes_error_diff}]{%
\includegraphics[width=0.46\textwidth]{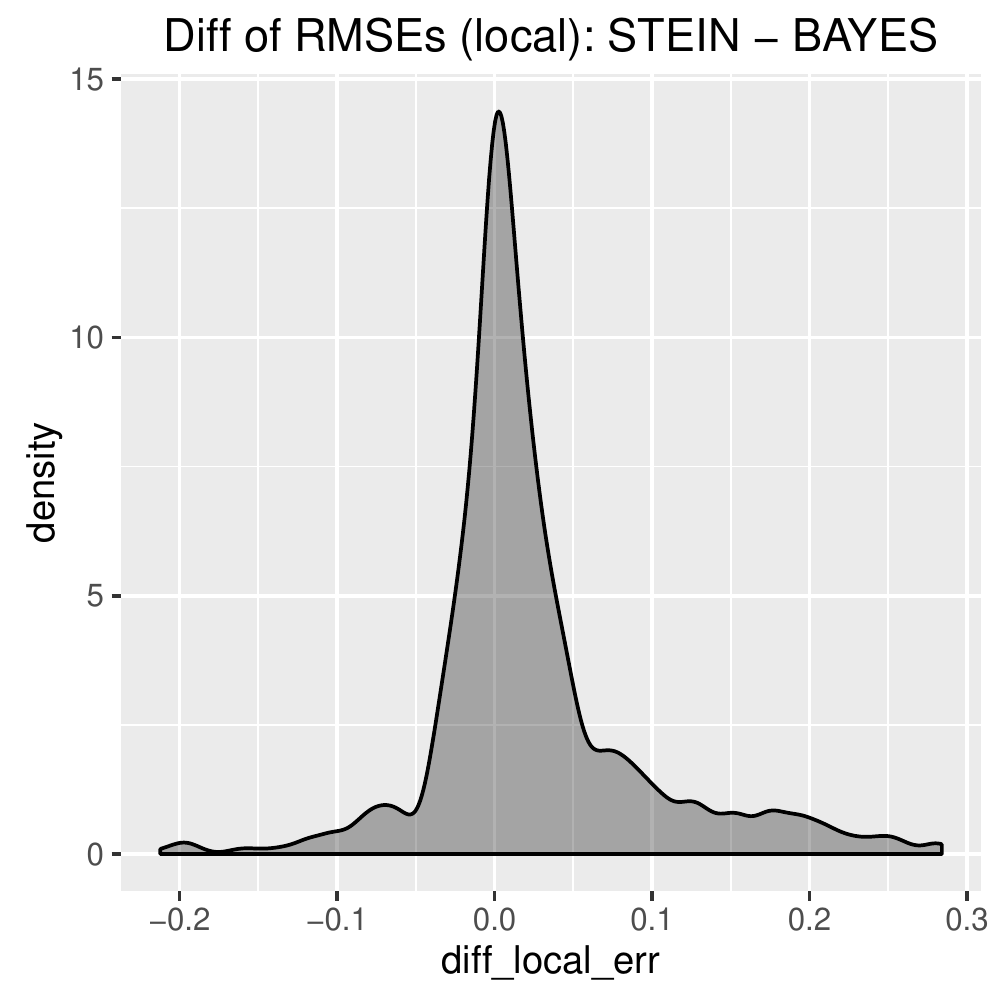}}
\hfill
\subfloat[Stein versus Bayes RMSE, $\sigma_b=0.25$.\label{subfig:bayes_error}]{%
\includegraphics[width=0.46\textwidth]{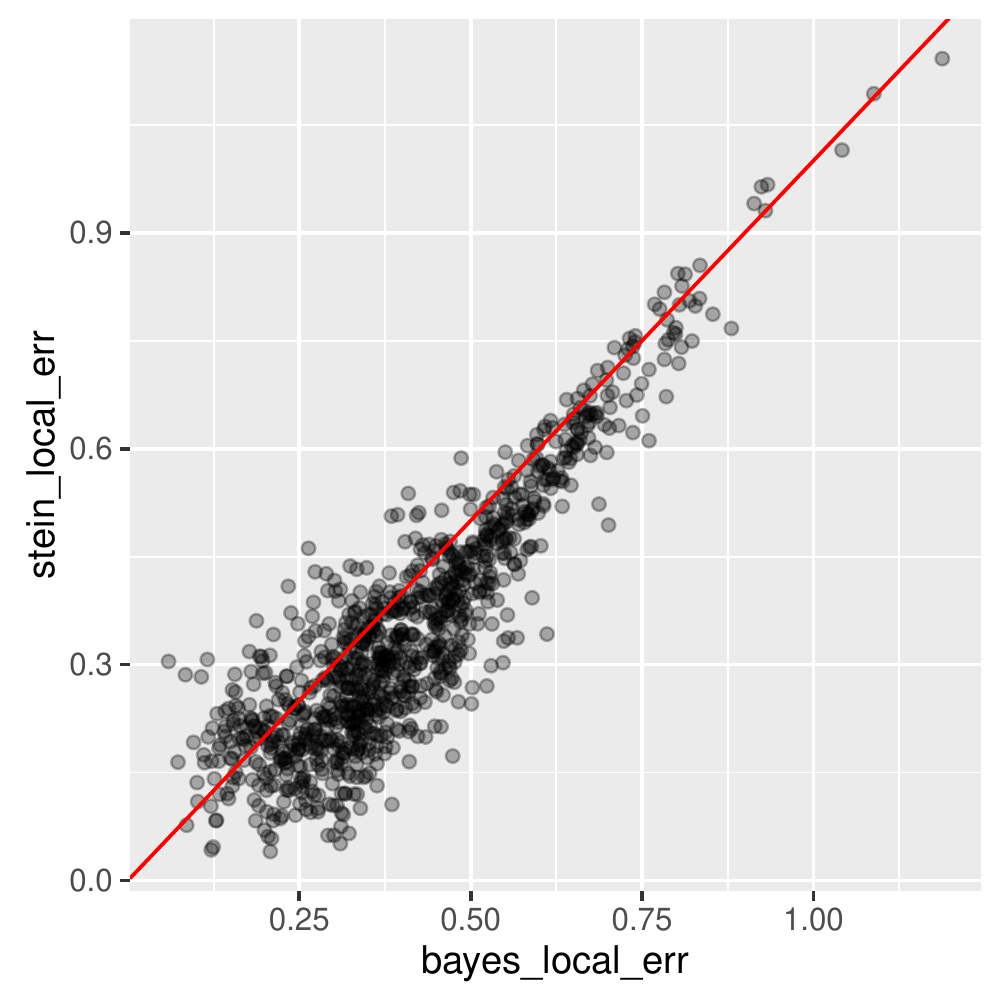}}
\subfloat[Stein minus Bayes RMSE, $\sigma_b=0.25$.\label{subfig:bayes_error_diff}]{%
\includegraphics[width=0.46\textwidth]{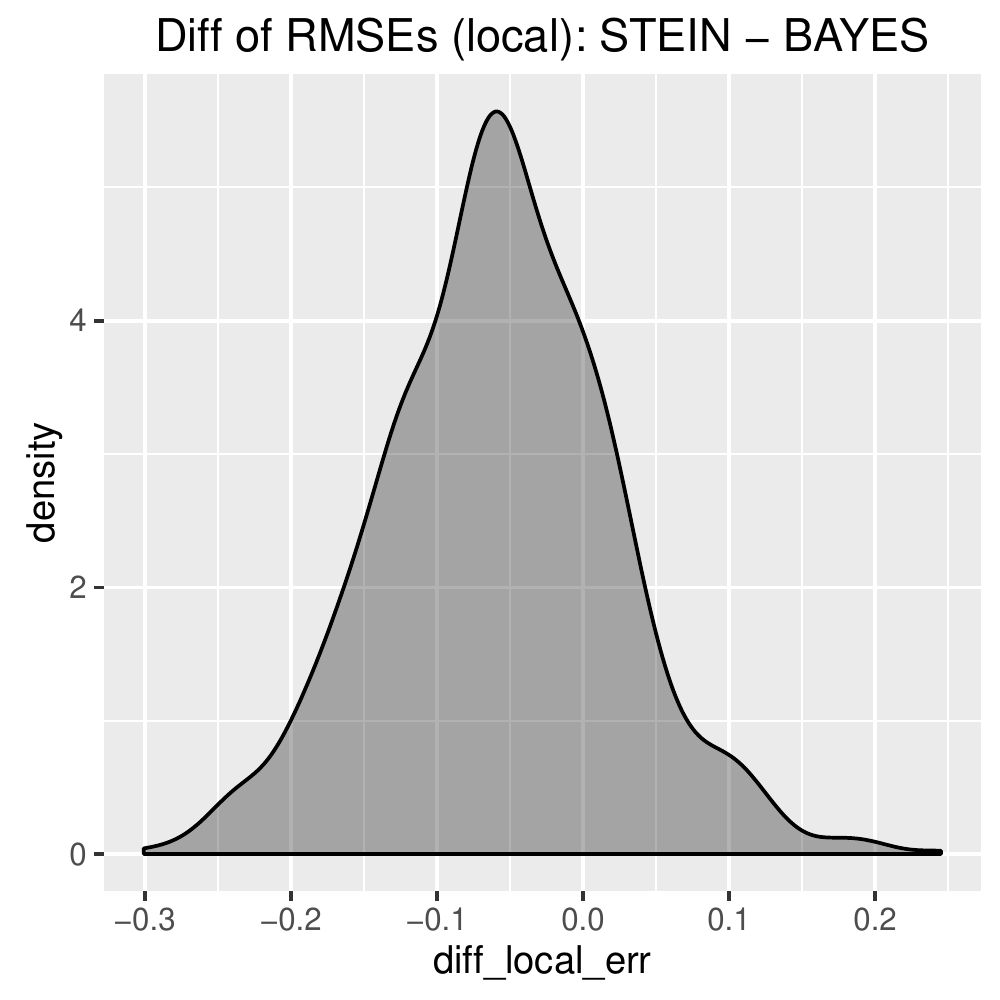}}
\caption{Comparison of Bayes and Stein RMSEs on two simulations
  of $1000$ replicates with $B=4$ brands.}
\label{fig:bayes_error_diff}
\end{figure}

\subsection{Posterior credible intervals}


To investigate what power can be gained pooling data together using a multibrand experiment over conducting multiple single-brand experiments independently, we simulated datasets following the same procedure as before, using \(G = 160\) GEOs with either \(B = 1\) (no pooling) or \(B = 4\) (pooling) brands, the effectivenesses of which were drawn from a Gaussian distribution \(\dnorm(1, \sigma^2_b)\) with $\sigma_b=0.25$. Now the brands return on average one dollar of incremental revenue per dollar spent, and the standard deviation of $0.25$ represents substantial brand differences. The relative incremental ad-spend we made varies from 0.5\% to 2\% to show how it impacted the results. Figure \ref{fig:bayes_pool_vs_nopool} displays the estimated densities (over 10{,}000 replications) of the half-width of 95\% credible intervals around the brands' effectivenesses, the solid lines representing the 95\% quantile of these densities, in each scenario.

\begin{figure}[t]%
\centering
\includegraphics[width=0.98\textwidth]{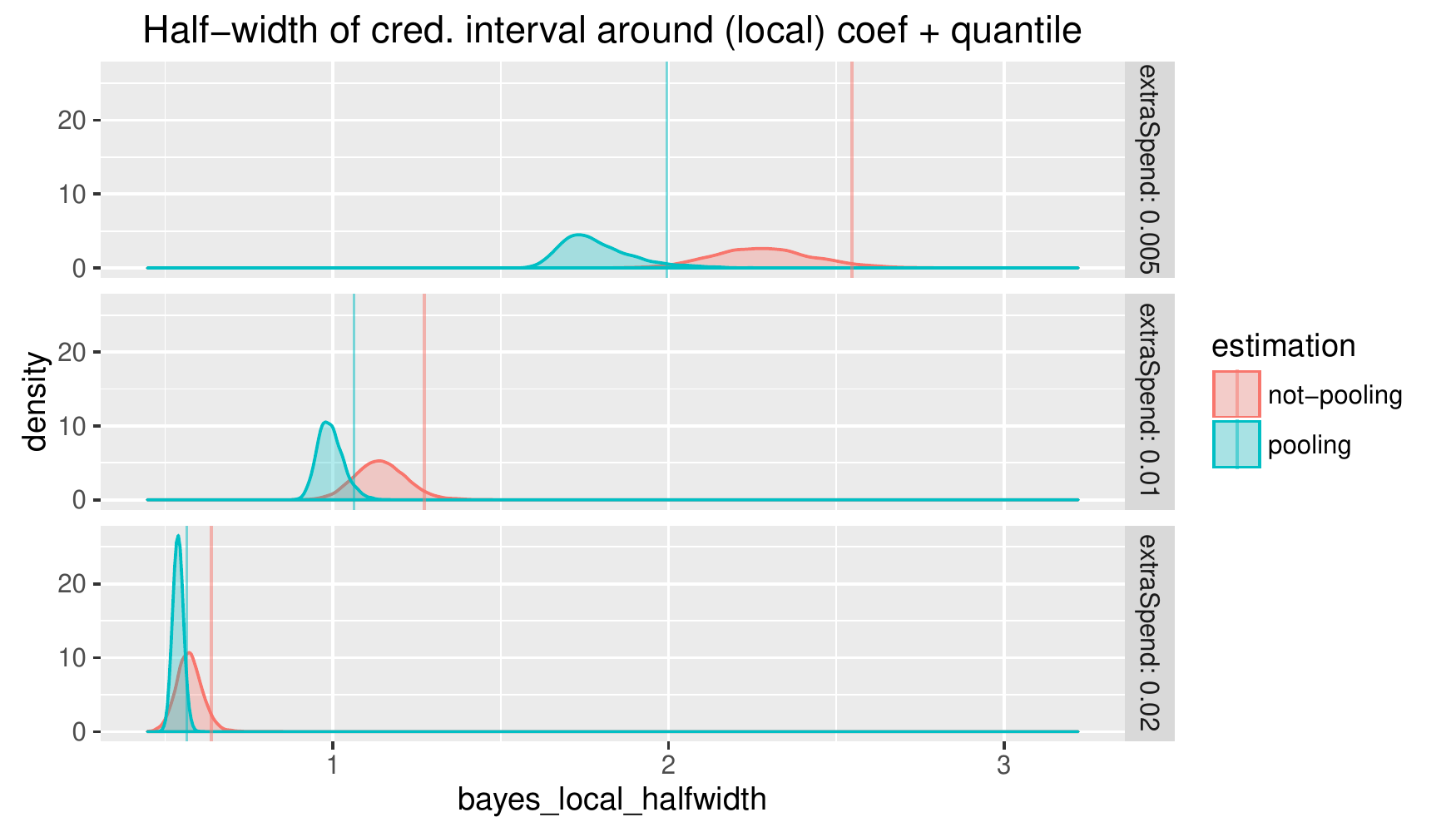}
\caption{Distribution of the half-width of credible intervals around brands' effectiveness coefficients.}
\label{fig:bayes_pool_vs_nopool}
\end{figure}

While the half-width of credible intervals does not quite display an inverse relationship with the extra spend when pooling multiple experiments together as it does when conducting experiments separately (doubling the incremental spend lets us detect twice as small an effectiveness in the single-brand scenario), it is clear that pooling experiments together does bring improvement to the power of the geoexperiments.

It is best if posterior credible intervals have frequentist
coverage levels close to their nominal values.
Table~\ref{tab:coverage} shows empirical coverage levels for $B=4$
brands and $G=160$ GEOS for a range of average returns $\beta$
and brand to brand standard deviations $\sigma_b$.  On the whole
the coverage is quite close to nominal. There is slight over
coverage, probably due to the prior being dominated by large
values of $\sigma_b$.

\begin{table}
\centering
\begin{tabular}{lccccc}  
\toprule
{$\beta \bigm\backslash {\sigma_b}$}  &    0.10 & 0.25  & 0.50  &0.75 &    1.00\\
\midrule
0.25 & 0.974 & 0.974 & 0.967 & 0.953 & 0.950\\
0.50 & 0.975 & 0.967 & 0.962 & 0.954 & 0.946\\
0.75 & 0.974 & 0.968 & 0.967 & 0.959 & 0.952\\
1.00 & 0.978 & 0.975 & 0.969 & 0.964 & 0.954\\
1.25 & 0.975 & 0.968 & 0.966 & 0.960 & 0.958\\
1.50 & 0.978 & 0.968 & 0.966 & 0.954 & 0.954\\
\bottomrule
\end{tabular}
\caption{\label{tab:coverage}
Observed coverage levels of 95\% Bayesian credible
  intervals. Six values of average gain $\beta$ and five
values of brand standard error $\sigma_b$.}
\end{table}

\section{Conclusions and discussion}\label{sec:discussion}

In our examples we see that combining data from multiple
brands at once leads to more accurate experiments than
single brand experiments would yield.  This happens for
both Bayes and empirical Bayes (Stein shrinkage) estimates.
The estimate for any given brand gets better by using data
from the other brands.

This efficiency brings practical benefits. An experiment
on multiple brands might need to use fewer GEOs, or it might
be informative at smaller, less disruptive, changes in the
amount spent.

Ordinary Stein shrinkage towards a common mean 
is advantageous when $B\ge4$ by the theory
of Stein estimation \citep{efro:morr:1973}.
The method of~\cite{xie:kou:brow:2012} is further optimized
to handle unknown and unequal variances.

We have simply plugged in unbiased estimates
of variance.  \cite{hwan:qiu:zhao:2009} propose
a different method that begins with shrinkage
applied to the variance estimates themselves. They also
develop confidence intervals that could be used for
our $\beta_b$.
Stein shrinkage is a form of empirical Bayes estimation.
We found that by using a Bayesian hierarchical
model we could get posterior credible intervals
with good frequentist coverage.

For planning purposes it is worthwhile
to consider what parameter values are realistic
in a specific setting. By simulating several
choices we can find an experiment size that
gets the desired accuracy at acceptable cost.

The most difficult
quantity to choose for a simulation is $\sigma_b^2$,
the variance of the true returns $\beta_b$ to advertising
for different brands.  That is difficult because one
often starts from a position of not having good causal
values for any individual brand.  One more values
for this parameter must then be chosen based on intuition
or opinion.  Because  the true response rate to advertising
can be expected to drift it is reasonable to suppose that
multiple experiments will need to be made in sequence.  Estimates
of $\sigma^2_b$ from one experiment will be useful in planning
the next ones.

\section*{Acknowledgments}

Thanks to Jon Vaver, Jim Koehler, David Chan 
and Qingyuan Zhao for valuable comments.
Art Owen is a professor at Stanford University, but
this work was done for Google Inc., and was
not part of his Stanford responsibilities.

\bibliographystyle{apalike}
\bibliography{multibrand}
\end{document}